\newcommand{\Z}{\mathbb{Z}} 
\newcommand{\Q}{\mathbb{Q}} 
\newcommand{\R}{\mathbb{R}} 
\newcommand{\F}{\mathbb{F}} 
\newcommand{\K}{\mathbb{K}} 
\newcommand{\p}{\mathfrak{p}} 
\newcommand{\M}{\mathcal{M}}
\newcommand{\Cl}{\operatorname{Cl}}
\newcommand{\OO}{\mathcal{O}}
\newtheorem{theorem}{Theorem}
\newtheorem{lemma}[theorem]{Lemma}
\newtheorem{heuristic}[theorem]{Heuristic}
\newtheorem{proposition}[theorem]{Proposition}
\newtheorem{corollary}[theorem]{Corollary}
\newtheorem{notation}[theorem]{Notation}
\begin{document}
\title[An $L(1/3)$ algorithm for number fields]{An $L(1/3)$ algorithm for ideal class group and regulator computation in certain number fields}     
\author{Jean-Fran\c{c}ois Biasse}
\address{LIX , \'{E}cole Polytechnique , 91128 PALAISEAU , France }
\email{biasse@lix.polytechnique.fr}
\thanks{The author was supported by a DGA grant}
\subjclass[2000]{Primary 54C40, 14E20; Secondary 46E25, 20C20}
\keywords{Number fields, ideal class group, regulator, units, index calculus, subexponentiality}

\begin{abstract}
We analyse the complexity of the computation of the class group structure, regulator, and a system of fundamental units of a certain class of number fields. Our approach differs from Buchmann's, who proved a complexity bound of $L(1/2,O(1))$ when the discriminant tends to infinity with fixed degree. We achieve a subexponential complexity in $O(L(1/3,O(1)))$ when both the discriminant and the degree of the extension tend to infinity by using techniques due to Enge and Gaudry in the context of algebraic curves over finite fields. 
\end{abstract}

\maketitle 

\section{Introduction}

Let $\K=\Q(\theta)$ be a number field of degree $n$ and discriminant $\Delta$. The ideal class group of its maximal order $\OO_{\K}$ is a finite abelian group that can be decomposed as: 
$$\Cl(\OO_{\K})=\bigoplus_i \Z/d_i\Z,$$
with $d_i\mid d_{i+1}$. Computing the structure of $\Cl(\OO_{\K})$, along with the regulator and a system of fundamental units of $\OO_{\K}$ is a major task in computational number theory. In addition, many algorithms solving the discrete logarithm problem are based on the group structure computation. 

In 1968, Shanks \cite{Shanks,Shanks2} proposed an algorithm relying on the baby-step giant-step method to compute the structure of the ideal class group and the regulator of a quadratic number field in time $O\left( |\Delta |^{1/4 + \epsilon} \right)$, or $O\left( |\Delta |^{1/5 + \epsilon} \right)$ under the extended Riemann hypothesis \cite{LenstraShanks}. Then, a subexponential strategy for the computation of the group structure of the class group of an imaginary quadratic extension was described in 1989 by Hafner and McCurley \cite{hafner}. The expected running time of this method is
$$L_{\Delta}(1/2,\sqrt{2}+o(1)) = e^{ \left( \sqrt{2}+o(1)\right) \sqrt{\log|\Delta|\log\log|\Delta|} }.$$
Buchmann \cite{Buchmann} generalized this result to the case of an arbitrary extension, the complexity being valid for fixed degree $n$ and $\Delta$ tending to infinity. Enge \cite{Enge_L12} used this technique in the context of discrete logarithm computations in the Jacobian of hyperelliptic curves, and developed with Gaudry \cite{Enge} an algorithm for computing the group structure of the Jacobian and solving the discrete logarithm problem for a certain class of curves in time:
$$L_{q^g}\left( 1/3 , O(1) \right) = e^{O(1)\left( \log(q^g)^{1/3} \log\log(q^g)^{2/3}\right)} .$$
In this paper, we adapt the $L(1/3)$ algorithm of Enge and Gaudry to the computation of the group structure of the ideal class group, the regulator, and a system of fundamental units of $\OO_{\K}$. We deal with the case where both the discriminant and the degree of the extension grow to infinity in certain proportions, whereas in \cite{Buchmann} the degree is assumed to be fixed.

\section{Main idea}

We consider a number field $\K = \Q(\theta)$ of discriminant $\Delta$ which can be written as:
$$\K = \Q[X]/T(X),$$
with $T(X) = t_nX^n + t_{n-1}X^{n-1} + \hdots + t_0\in\Z[X]$, and $n:=[\K:\Q]$. Let $d$ be a bound on the bit size of the coefficients of $T$: 
$$d := \max_i \left\lbrace \log(t_i)\right\rbrace.$$ 
In addition, we require that:
\begin{align}\label{cond_n}
 &n\leq n_0\log\left( |\Delta|\right)^{\alpha}(1+o(1))\\
& d\leq d_0\log\left( |\Delta|\right)^{1-\alpha}(1+o(1)),\label{cond_d}
\end{align}
for some $\alpha\in\left[ \frac{1}{3},\frac{2}{3}\right[$, and  some constants $n_0$ and $d_0$. We define  $\kappa:=n_0d_0$. We also denote by $r_1$ the number of real places, by $r_2$ the number of complex places and we define $r:=r_1+r_2-1$. Our algorithm computes the group structure of $\Cl(\Z[\theta])$, its regulator, and a system of fundamental units of $\Z[\theta]$, in expected time lying in:
$$O\left( L_{\text{Disc}(T)}(1/3 , O(1)) \right).$$
In the case of number fields satisfying $\Z[\theta] = \OO_{\K}$ and the above restrictions, we compute the group structure of $\Cl(\OO_{\K})$, $R$, and a system of fundamental units, in expected time $L_{\Delta}(1/3,O(1))$. From now on, we assume that $\K$ satisfies \eqref{cond_n}, \eqref{cond_d}, and $\Z[\theta]=\OO_{\K}$.

\subsubsection*{Example}

Let $\Delta\in\Z$, and $\K_{n,K}$ be an extension of $\Q$ defined by an irreducible polynomial of the form: 
$$T(X) = X^n - K,$$
with 
\begin{align*}
&\log K = \left\lfloor \log\left( |\Delta|\right)^{1-\alpha}\right\rfloor\\
& n= \left\lfloor\log\left( |\Delta|\right)^{\alpha}\right\rfloor,
\end{align*}
for some $\alpha\in\left[ \frac{1}{3},\frac{2}{3}\right[$. Then, $\OO_{\K_{n,K}}$ has discriminant satisfying: 
$$\log(\text{Disc}(\OO_{\K_{n,K}}))=\log( n^{n}K^{n-1}) = \log(|\Delta|) (1+o(1)).$$
If in addition we require that $n$ and $K$ be the largest prime numbers below their respective bounds such that: 
$$n^2 \nmid K^{n-1}-1,$$
then we meet the last restriction $\Z[\theta] = \OO_{\K_{n,K}}$.\\

We proceed by analogy with the approach of \cite{Enge} in the context of algebraic curves, where the authors examined curves of the form: 
$$\mathcal{C}:Y^n + X^d + f(X,Y),$$
such that any monomial $X^iY^j$ occuring in $f$ satisfies $ni+dj < nd$. The genus $g$ is assumed to tend to infinity and: 
\begin{align*}
n &\approx g^{\alpha} \\
d &\approx g^{1-\alpha}.
\end{align*}
The idea in \cite{Enge} is to look for functions $\phi(X,Y)\in\F_q[X,Y]$ satisfying: 
$$\deg_Y\phi\approx g^{\alpha-1/3}\ \ \text{and}\ \ \deg_X\phi\approx g^{2/3-\alpha},$$
with $\mathcal{N}(\phi)$ splitting into polynomials of degree bounded by $B = \log\left( L(1/3,\rho) \right)$ for some number $\rho$ determined in the complexity analysis. Each time such a decomposition occurs, the ideal $(\phi)$ is necessarily a product of primes belonging to the set $\mathcal{B}$ of the prime ideals of degree bounded by $B$: 
$$(\phi) = \prod_{\p_i\in \mathcal{B}} \p_i^{e_i}.$$
Such a decomposition of a principal ideal is called a \textit{relation}. In the following, we will also denote the vector $(e_i)$ itself a relation. Every time we find a relation, we add the row vector $(e_i)$ to a matrix $M\in\Z^{m\times N}$ called the \textit{relation matrix}, where $N := |\mathcal{B}|$, and $m\geq k$ is the number of relations collected. A linear algebra step is performed on this matrix. It consists in computing its Smith Normal Form, that is to say integers $d_1,\hdots,d_N$, with $d_N|d_{N-1}|\hdots |d_1$, such that there exist two unimodular matrices $U\in\Z^{m\times m}$ and $V\in\Z^{N\times N}$ satisfying:  

\[ M = U\left( 
   \begin{BMAT}(@)[1pt,1cm,1cm]{c}{c.c}
   \begin{BMAT}(e){cc}{c}
   \begin{BMAT}(e){ccc}{ccc}
d_1&       &      (0) \\
   & \ddots&       \\
 (0)  &       & d_k       
  \end{BMAT}  & (0)
  \end{BMAT} \\
  \begin{BMAT}[2pt,3cm,1cm]{c}{c} 
	(0)
  \end{BMAT}
  \end{BMAT}
   \right)V.  \]
The SNF of $M$ provides us with the group structure of the Jacobian of the curve $\mathcal{C}$. Indeed, if $\mathcal{L}_{\Z}$ is the lattice spanned by all the possible relations, and if $\mathcal{J}$ denotes the Jacobian of $\mathcal{C}$, then we have:
$$\mathcal{J} \simeq \Z^N / \mathcal{L}_{\Z}.$$
Providing $m$ is large enough to ensure that the rows of $M$ generate $\mathcal{L}_{\Z}$, we have:
$$\mathcal{J}\simeq \bigoplus_i \Z/d_i\Z.$$

In our context, we need the group structure of $\Cl(\OO_{\K})$, along with the regulator $R$, and a system of fundamental units of $\OO_{\K}$. The computation of the group structure of $\Cl(\OO_{\K})$ is done using methods similar to those used for the computation of the structure of $\mathcal{J}$. We look for relations of the form: 
$$(\phi) = \prod_i \p_i^{e_i},$$
where $\phi\in\K$, and where the $\p_i$ are prime ideals of norm bounded by $L(1/3,\rho)$. Every time we find such a relation, we add the row vector $(e_i)_{i\leq N}$ to the relation matrix denoted by $M_{\Z}\in\Z^{m\times N}$. To continue the analogy with \cite{Enge}, we require that $\phi$ be of the form:
$$\phi = A(\theta),$$ 
where $A\in\Z[X]$ of degree $k$. During the analysis, we will provide bounds on $k$ and on the coefficients of $A$, that delimit the search space. Providing the rows of $M_{\Z}$ generate the lattice $\mathcal{L}_{\Z}$ of all the possible row vectors $(e_i)_{i\leq N}\in\Z^N$ representing a relation, we have: 
$$\Cl(\OO_{\K})\simeq \Z^N/\mathcal{L}_{\Z} \simeq \bigoplus_{i\leq N} \Z/d_i\Z,$$
where the $d_i$ are the diagonal coefficients of the SNF of $M_{\Z}$. The main difference with the context of algebraic curves is the computation of $R$ and of a system of fundamental units. The group of units of $\OO_{\K}$ is of the form: 
$$U(\K)\simeq \mu(\K)\times\Z^r,$$
where $\mu(\K)$ is the multiplicative group of the roots of unity in $\OO_{\K}$. A system of fundamental units $(\gamma_i)$, $i\leq r$, is a set of elements of $\K$ satisfying: 
$$U(\K)\simeq \mu(\K)\times\left\langle \gamma_1\right\rangle \times\hdots\times\left\langle \gamma_r\right\rangle .$$
Once such a system is found, we use the logarithm map:
\[   \left.  \begin{array}{cccc}
        & \K & \longrightarrow & \R^{r+1}\\
      \text{Log}:  & \phi & \longmapsto & (\log|\phi|_1,\hdots,\log|\phi|_{r+1}), \end{array} \right.\] 
where the $|.|_j$ are the archimedian valuations on $\K$, to construct a matrix $A\in\R^{r\times(r+1)}$ whose rows are the vectors $\text{Log}(\phi_i)$, for $i\leq r$. The regulator is defined as the determinant of any $r\times r$ minor of $A$. To construct $A$ and a system of fundamental units, we augment the row vectors by columns containing the archimedian valuations, and add the row: 
$$(e_1,\hdots,e_k,\log|\phi|_1,\hdots,\log|\phi|_{r+1})\in\Z^N\times\R^{r+1}$$
to a relation matrix $M$ whenever a relation $(\phi) = \prod_i \p_i^{e_i}$ is found. A linear algebra step performed on $M$ provides us with the group structure, the regulator, and a system of fundamental units. It is described in detail in \textsection \ref{linear}.

\section{The relation matrix}

Let $\rho$ be a constant to be determined later, and $B$ a smoothness bound satisfying: 
$$B = \lceil L_{\Delta}(1/3,\rho)\rceil.$$ 
We define the factor base $\mathcal{B}$ as the set of all non inert prime ideals of norm bounded by $B$. This factor base has cardinality: 
$$N:=|\mathcal{B}| = L(1/3,\rho+o(1)).$$ 
In the following, we will need to test the smoothness of principal ideals of the form $(\phi)$, where $\phi = A(\theta)$ with $A\in\Z[X]$. We will use the well-known result that is proved in \cite{cohen}, Lemma 3.3.4: 
\begin{lemma}\label{lemma_norm}
 The norm of $\phi$ satisfies:
$$\mathcal{N}(\phi) = \text{Res}\ ( T(X) , A(X) ),$$
where $\text{Res}$ denotes the resultant.
\end{lemma}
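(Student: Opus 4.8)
The plan is to prove the identity by passing to the splitting field of $T$ and matching the norm of $\phi$, written as a product over the conjugates of $\theta$, against the product formula for the resultant.

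First I would use that $\K/\Q$ is separable of degree $n$, so that $\mathcal{N}(\phi)=\mathcal{N}_{\K/\Q}(\phi)$ equals the product $\prod_{i=1}^{n}\sigma_i(\phi)$ over the $n$ distinct $\Q$-embeddings $\sigma_i\colon\K\hookrightarrow\mathbb{C}$. Since $T$ is, up to its leading coefficient, the minimal polynomial of $\theta$, it is irreducible and separable, and its roots $\theta_1,\dots,\theta_n$ are exactly the conjugates $\sigma_i(\theta)$, each occurring once. Writing $\phi=A(\theta)$ and using that every $\sigma_i$ is a ring homomorphism fixing $\Q$, one gets $\sigma_i(\phi)=A(\sigma_i(\theta))=A(\theta_i)$, whence
\[
\mathcal{N}(\phi)=\prod_{i=1}^{n}A(\theta_i).
\]

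Next I would invoke the standard product formula for the resultant: if $T$ has leading coefficient $t_n$ and roots $\theta_1,\dots,\theta_n$, then $\text{Res}(T,A)=t_n^{\deg A}\prod_{i=1}^{n}A(\theta_i)$. This in turn is classical and can be established in the usual way — view $\text{Res}(T,A)$ as the determinant of the Sylvester matrix, observe that as a polynomial in the coefficients it vanishes whenever $T$ and $A$ share a root, and pin down the remaining constant by a degree and leading-term computation. Under the normalization in which $T$ is taken monic, $t_n=1$, and comparing the two displays gives $\mathcal{N}(\phi)=\text{Res}(T,A)$. The substantive input here is really just this product formula; there is no genuine obstacle beyond it.

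The only point that requires care is the normalization of $T$: for a non-monic $T\in\Z[X]$ the two sides differ by the factor $t_n^{\deg A}$, so one must either assume $T$ monic — equivalently, replace it by the monic minimal polynomial of $\theta$ — or carry that factor throughout. Everything else (separability of $\K/\Q$, the identification of the $\sigma_i(\theta)$ with the roots of $T$, and the resultant product formula) is entirely standard, which is why the statement is quoted from \cite{cohen} rather than reproved here.
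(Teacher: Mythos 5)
Your proof is correct and is exactly the standard argument underlying the result, which the paper does not prove but simply cites from \cite{cohen}: write $\mathcal{N}(\phi)$ as the product of $A(\sigma_i(\theta))$ over the $n$ embeddings and compare with the resultant product formula. Your normalization remark is also apt: since the paper allows a non-monic $T$ with leading coefficient $t_n$, the stated identity holds literally only after replacing $T$ by the monic minimal polynomial of $\theta$ (otherwise the two sides differ by $t_n^{\deg A}$), a harmless discrepancy here since it does not affect the smoothness test or the complexity analysis.
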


Computing $\mathcal{N}(\phi)$ for $\phi\in\K$ allows us to decide whether $\phi$ is a product of prime ideals $\p\in\mathcal{B}$. Indeed, it suffices to check if $\mathcal{N}(\phi)\in\Z$ is $B$-smooth which can be done by trial division or the ECM method in polynomial time.
We assume that the coefficients $a_i$ of the polynomial $A$ have their logarithm bounded by an integer $a$, and that there exist two constants $\delta$ and $\nu$ to be determined later such that: 
\begin{align}
 a &\leq \left\lceil \delta \frac{\kappa\log|\Delta|/n}{(\log|\Delta|/\mathcal{M})^{1/3}} \right\rceil \label{bound_a}\\
 k &\leq \left\lceil \nu \frac{n}{(\log|\Delta|/\mathcal{M})^{1/3}} \right\rceil, \label{bound_k}
\end{align}
with $\mathcal{M}:=\log\log|\Delta|$. Using Lemma \ref{lemma_norm} and Hadamard's inequality, we deduce an upper bound on $\log\mathcal{N}(\phi)$:
\begin{align}
\log\mathcal{N}(\phi) &\leq na + dk + n\log k + k\log n \\
 & \leq \kappa\log\left( |\Delta|\right)^{2/3}\mathcal{M}^{1/3}(\delta + \nu + o(1) ). \label{bound_norm}
\end{align}
In the following, we will also need a bound on the real coefficients $\log|\phi|_i$  occuring in the relation matrix. By the following proposition, we derive a bound on the $\log|\theta|_i$ from the imposed bounds on the coefficients of $T$:

\begin{proposition}
Let $\sigma_i$ be the $n$ complex embeddings of $\K$ such that we have $T = \prod_i (X-\sigma_i(\theta))$, then the $\sigma_i(\theta)$ satisfy: 
$$\log(|\theta|_i) = \log(|\sigma_i(\theta)|) = O(\log\left( |\Delta|\right) ^{1-\alpha}).$$ 
\end{proposition}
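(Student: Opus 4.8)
The plan is to read this statement as a purely quantitative root bound for the defining polynomial $T$, fed by the size hypothesis \eqref{cond_d}. Writing $T(X)=t_nX^n+\dots+t_0$ as before, irreducibility of $T$ forces $t_n$ and $t_0$ to be nonzero integers, so $|t_n|\ge 1$ and $|t_0|\ge 1$, while by the definition of $d$ every coefficient satisfies $|t_i|\le e^{d}$. (The statement writes $T$ in monic form; if $t_n\ne 1$ nothing below changes, one simply keeps the factor $|t_n|$ where it occurs.)

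First I would bound the $|\sigma_i(\theta)|$ from above by Cauchy's classical estimate. If $z$ is any root of $T$ with $|z|>1$, then from $t_nz^n=-\sum_{i=0}^{n-1}t_iz^i$ one gets
\[
|t_n|\,|z|^n\ \le\ \Big(\max_i|t_i|\Big)\sum_{i=0}^{n-1}|z|^i\ <\ \Big(\max_i|t_i|\Big)\frac{|z|^n}{|z|-1},
\]
so that $|z|<1+\max_i|t_i|/|t_n|\le 1+e^{d}$. Applying the very same reasoning to the reciprocal polynomial $\widetilde T(X):=t_0X^n+t_1X^{n-1}+\dots+t_n$, whose roots are exactly the numbers $\sigma_i(\theta)^{-1}$, yields $|\sigma_i(\theta)|^{-1}<1+\max_i|t_i|/|t_0|\le 1+e^{d}$. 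Combining the two inequalities gives, for every embedding $\sigma_i$,
\[
\bigl|\log|\sigma_i(\theta)|\bigr|\ \le\ \log\!\bigl(1+e^{d}\bigr)\ \le\ d+1.
\]

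It then remains only to insert the hypothesis \eqref{cond_d}, namely $d\le d_0\log(|\Delta|)^{1-\alpha}(1+o(1))$, to conclude that $\log(|\theta|_i)=\log(|\sigma_i(\theta)|)=O\bigl(\log(|\Delta|)^{1-\alpha}\bigr)$, as claimed (and the same bound holds with the square normalization at the complex places, up to the harmless factor $2$). I do not expect a genuine obstacle here; the only two points that deserve a moment of care are that $T$ need not be monic — handled by retaining $|t_n|\ge 1$ and $|t_0|\ge 1$ in the denominators — and that one really wants a two-sided estimate on $\log|\sigma_i(\theta)|$, which is precisely why the reciprocal polynomial $\widetilde T$ is brought in, to control the embeddings at which $\theta$ is small in absolute value.
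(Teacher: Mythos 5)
Your proof is correct, but it takes a different route from the paper. The paper invokes the Landau--Mignotte factor bound \cite{mignotte}: applying $|d_j|\le 2^{m-1}(|T|+t_n)$ to the degree-one factor $D=X-\sigma_i(\theta)$ of $T$ gives directly $\log|\theta|_i\le\log(|T|+t_n)=O(\log(|\Delta|)^{1-\alpha})$, where $|T|$ is the euclidean norm of the coefficient vector; the conclusion then follows from \eqref{cond_d} exactly as in your last step. You instead use Cauchy's elementary root bound on $T$ together with the reciprocal polynomial $\widetilde T$ to control the roots both from above and from below, obtaining the two-sided estimate $\bigl|\log|\sigma_i(\theta)|\bigr|\le d+1$. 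What your argument buys is precisely this two-sidedness: the paper's proof, as written, only bounds $\log|\theta|_i$ from above, and says nothing about embeddings where $|\sigma_i(\theta)|$ is very small, whereas an $O$-statement for a quantity that can be negative (and its later use in bounding the entries $\log|\phi|_j$ of $M_{\R}$) really calls for a bound on the absolute value; your reciprocal-polynomial step closes that gap at no extra cost, using only $|t_0|\ge 1$ and $|t_n|\ge 1$. The paper's route is slightly shorter and quotes a standard reference, but is otherwise neither more general nor sharper than yours.
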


\begin{proof}
Landau-Mignotte's theorem \cite{mignotte} states that if $D\mid T$ with $\deg D=m$, then the coefficients $d_j$ of $D$ satisfy:
$$|d_j|\leq 2^{m-1}(|T| + t_n),$$
where $|T|$ is the euclidian norm of the vector of the coefficients of $T$. Applying this to $D = X-\sigma_i(\theta)$ and $m=1$ allows us to obtain:
$$\log(|\theta|_i)\leq \log(|T|+t_n) \in O(\log\left( |\Delta|\right) ^{1-\alpha}).$$
\end{proof}

\begin{corollary}
With $\phi = A$, and $a$ and $k$ respectively bounded by (\ref{bound_a}) and (\ref{bound_k}), we have:
$$\log|\phi|_i \leq O(\log\left( |\Delta|\right) ^{2/3}\M^{1/3}).$$
\end{corollary}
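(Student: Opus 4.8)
The plan is to bound $|\phi|_i=|\sigma_i(A(\theta))|$ by a direct application of the triangle inequality, and then to substitute the bounds \eqref{bound_a} and \eqref{bound_k} on $a$ and $k$ together with the estimate on $\log|\theta|_i$ supplied by the preceding Proposition. Writing $A(X)=\sum_{j=0}^{k}a_jX^{j}$ with $\log|a_j|\le a$ for all $j$, applying one of the $n$ embeddings $\sigma_i$ and recalling that $|\sigma_i(\theta)|=|\theta|_i$, one gets
\[
|\phi|_i=\left|\sum_{j=0}^{k}a_j\,\sigma_i(\theta)^{j}\right|\le\sum_{j=0}^{k}|a_j|\,|\theta|_i^{\,j}\le(k+1)\,e^{a}\,\max\!\left(1,|\theta|_i\right)^{k},
\]
hence, taking logarithms,
\[
\log|\phi|_i\le a+k\cdot\max\!\left(0,\log|\theta|_i\right)+\log(k+1).
\]
It then suffices to check that each of the three terms on the right is $O\!\left(\log(|\Delta|)^{2/3}\M^{1/3}\right)$.

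For the middle term, the Proposition gives $\max(0,\log|\theta|_i)=O\!\left(\log(|\Delta|)^{1-\alpha}\right)$, while \eqref{bound_k} together with the upper bound $n=O\!\left(\log(|\Delta|)^{\alpha}\right)$ coming from \eqref{cond_n} gives $k=O\!\left(\M^{1/3}\log(|\Delta|)^{\alpha-1/3}\right)$; multiplying the two,
\[
k\cdot\max\!\left(0,\log|\theta|_i\right)=O\!\left(\M^{1/3}\log(|\Delta|)^{(\alpha-1/3)+(1-\alpha)}\right)=O\!\left(\log(|\Delta|)^{2/3}\M^{1/3}\right).
\]
For the first term, using only $n\ge1$ we have $\log(|\Delta|)/n\le\log(|\Delta|)$, so \eqref{bound_a} gives $a\le\delta\kappa\,\M^{1/3}\log(|\Delta|)^{2/3}(1+o(1))$, again $O\!\left(\log(|\Delta|)^{2/3}\M^{1/3}\right)$. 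Finally $\log(k+1)=O(\log\log|\Delta|)=O(\M)$, which is negligible against the same quantity. Summing the three contributions yields the claim.

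I do not anticipate any real difficulty in this argument: after the elementary bound $|A(\sigma_i(\theta))|\le(k+1)\,e^{a}\max(1,|\theta|_i)^{k}$, everything reduces to substitution. The only points that need some care are the bookkeeping of the exponents of $\log|\Delta|$ — in particular recognising that the dominant contribution is the term $k\log|\theta|_i$, of order exactly $\log(|\Delta|)^{2/3}\M^{1/3}$, while $a$ and $\log(k+1)$ are of the same order or smaller — and observing that the normalisation convention at the complex places merely alters the implied constant, hence does not affect the stated $O$-bound.
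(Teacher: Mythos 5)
Your argument is correct and is exactly the intended route: the paper leaves this corollary without an explicit proof, but it is meant to follow, just as you do, by applying an embedding to $A(\theta)$, bounding $\log|\phi|_i \le a + k\log\max(1,|\theta|_i) + \log(k+1)$ via the triangle inequality, and substituting the bounds \eqref{bound_a}, \eqref{bound_k}, \eqref{cond_n} together with the preceding Proposition's estimate $\log|\theta|_i = O(\log(|\Delta|)^{1-\alpha})$. Your bookkeeping of exponents (dominant term $k\log|\theta|_i$ of order $\log(|\Delta|)^{2/3}\M^{1/3}$, with $a$ and $\log(k+1)$ no larger) matches the paper's conventions, so nothing further is needed.
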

To compute the probability for $\phi$ to be $\mathcal{B}$-smooth, we have to make the following assumption: 

\begin{heuristic}
We assume that $\mathcal{N}(\phi)$ behaves like a random number whose logarithm satisfies
$$\log(\mathcal{N}(\phi))\leq \iota:=\kappa\log\left( |\Delta|\right) ^{2/3}\mathcal{M}^{1/3}(\delta + \nu + o(1) ),$$
and whose distribution is given by the $\psi$ function of \cite{Canfield}.
\end{heuristic}
Consequently, computing the probability for a given $(\phi)$ to be $\mathcal{B}$-smooth boils down to computing the probability for a number whose logarithm is bounded by $\iota$ to be smooth with respect to prime numbers with logarithm bounded by 
$$\mu := \lceil \rho\log\left( |\Delta|\right) ^{1/3}\M^{2/3}\rceil.$$
Using \cite{Canfield}, and carrying out the same computation as in the proof of Theorem 1 of \cite{Enge}, one readily shows the following result on the probability of finding a relation:
\begin{proposition}\label{smoothness}
Let:
\begin{align*}
\iota&= \lfloor\log L(\zeta , c)\rfloor = \lfloor c\log\left( |\Delta|\right) ^{\zeta}\M^{1-\zeta}\rfloor \\
\mu&=  \lceil\log L(\beta,d)\rceil= \lceil d\log\left( |\Delta|\right) ^{\beta}\M^{1-\beta}\rceil,
\end{align*}
then we have: 
$$\frac{\psi(\iota,\mu)}{e^{\nu}}\geq L\left( \zeta-\beta,\frac{-c}{d}(\zeta-\beta)+o(1)\right), $$
where $\psi(\iota,\mu)$ denotes the cardinality of the set of integers $x$ such that $\log x\leq \iota$, and $x$ is smooth with respect to the set of prime numbers $p$ such that $\log p\leq \mu$. 
\end{proposition}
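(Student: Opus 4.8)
The plan is to read the bound straight off the Canfield--Erd\H{o}s--Pomerance estimate on smooth numbers quoted from \cite{Canfield}, exactly as in the proof of Theorem~1 of \cite{Enge}; the only remaining work is an elementary asymptotic computation. Write, in the classical normalisation, $\psi(x,y)$ for the number of integers $\le x$ all of whose prime factors are $\le y$; with the conventions of the statement one has $\psi(\iota,\mu)=\psi(e^{\iota},e^{\mu})$. The Canfield--Erd\H{o}s--Pomerance theorem asserts that for $x,y\to\infty$ with $u:=\log x/\log y$ inside the range where it applies --- it suffices, for instance, that $u\to\infty$ while $\log y/\log\log x\to\infty$ --- one has $\psi(x,y)=x\,u^{-u(1+o(1))}$. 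So the first step is to check this hypothesis for $x=e^{\iota}$, $y=e^{\mu}$: since $\log\log x=\log\iota=\zeta\,\M\,(1+o(1))$ and $\log y=\mu\asymp\log(|\Delta|)^{\beta}\M^{1-\beta}$ with $0<\beta<\zeta$, the quantity $u$ is of order $(\log|\Delta|/\M)^{\zeta-\beta}$ and $\log y/\log\log x$ of order $(\log|\Delta|/\M)^{\beta}$, both of which tend to infinity, so the estimate is available once $|\Delta|$ is large.

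Next I would compute $u$ and $\log u$. Dropping the floor and ceiling --- they shift $\iota$ and $\mu$ by $O(1)$, which is negligible throughout --- one has
\[
u=\frac{\iota}{\mu}=\frac{c}{d}\,\log(|\Delta|)^{\zeta-\beta}\,\M^{\beta-\zeta}=\frac{c}{d}\left(\frac{\log|\Delta|}{\M}\right)^{\zeta-\beta}.
\]
Since $\M=\log\log|\Delta|$, so that $\log\M=o(\M)$, and $\log(c/d)=O(1)=o(\M)$, this gives $\log u=(\zeta-\beta)\bigl(\M-\log\M\bigr)+\log(c/d)=(\zeta-\beta)\,\M\,(1+o(1))$, and therefore
\[
u\log u=\frac{c}{d}\,(\zeta-\beta)\,\log(|\Delta|)^{\zeta-\beta}\,\M^{1-(\zeta-\beta)}\,(1+o(1)).
\]

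Finally I would substitute back. The estimate gives $\psi(\iota,\mu)/e^{\iota}=u^{-u(1+o(1))}=e^{-u\log u\,(1+o(1))}$, and since $\nu=O(1)\le\iota$ for $|\Delta|$ large we have $\psi(\iota,\mu)/e^{\nu}\ge\psi(\iota,\mu)/e^{\iota}$; inserting the expression for $u\log u$ yields
\[
\frac{\psi(\iota,\mu)}{e^{\nu}}\ \ge\ \exp\!\left(-\tfrac{c}{d}(\zeta-\beta)\log(|\Delta|)^{\zeta-\beta}\M^{1-(\zeta-\beta)}(1+o(1))\right)\ =\ L\!\left(\zeta-\beta,\ -\tfrac{c}{d}(\zeta-\beta)+o(1)\right),
\]
where the $(1+o(1))$ in the exponent together with the discarded floor/ceiling corrections have been absorbed into the $o(1)$ in the second argument of $L$; this is legitimate because the main term $\log(|\Delta|)^{\zeta-\beta}\M^{1-(\zeta-\beta)}$ tends to infinity.

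The only genuine ingredient is the Canfield--Erd\H{o}s--Pomerance estimate; everything else is bookkeeping, so there is no real obstacle, only a point to watch: one must confirm that \emph{every} lower-order contribution --- the $(1+o(1))$ from the smoothness estimate, the $\log\M$ and $\log(c/d)$ terms in $\log u$, the floors, and the extra bounded factor $e^{\nu}$ --- is genuinely $o\bigl(\log(|\Delta|)^{\zeta-\beta}\M^{1-(\zeta-\beta)}\bigr)$, which is where the hypothesis $0<\beta<\zeta$ is really used, and that the parameters stay inside the validity range of the estimate. This is precisely the calculation performed in \cite{Enge} with the curve parameter $\log(q^{g})$, rewritten here with $\log|\Delta|$ in its place.
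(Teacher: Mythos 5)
Your proposal is correct and follows exactly the route the paper intends: the paper gives no detailed proof, merely invoking the Canfield--Erd\H{o}s--Pomerance estimate and the computation of Theorem~1 of Enge--Gaudry, which is precisely what you carry out, with the parameter $\log(q^g)$ replaced by $\log|\Delta|$. Your reading of the denominator as effectively $e^{\iota}$ (the statement's $e^{\nu}$ being weaker, since $\nu=O(1)$) is the sensible interpretation and your verification of the validity range of the estimate and of the absorption of all lower-order terms into the $o(1)$ is exactly the bookkeeping required.
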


\section{The linear algebra phase}\label{linear}

In this section, we start with an overview of the linear algebra phase, then we address its complexity in \textsection \ref{LLL} and \textsection\ref{reg_computation}. We denote by $M$ the relation matrix whose rows lie in $\Z^N\times\R^{r+1}$, and by $M_{\Z}$ and $M_{\R}$ the matrices formed respectively by the first $N$ and the last $r+1$ columns of $M$. $M$ thus has the following shape: 
\[M=
\left( 
\begin{BMAT}[2pt,3cm,1cm]{c.c}{c} 
M_{\Z} & M_{\R}
\end{BMAT}
\right).  \]
To make sure we generate the full lattice of relations, we make the following assumption:
\begin{heuristic}\label{heuristic_dim}
We assume that there is a constant $K_1$ such that collecting $N+K_1r$ allows us to generate the full lattice of relations.
\end{heuristic}

In the following, we assume that Heuristic \ref{heuristic_dim} is satisfied. If this is not the case (which can be tested easily as we will see at the end of this section), we start all over again and construct another relation matrix. $M_{\R}$ contains rational approximations of the $\log |\phi_i|_j$ for $i\leq N+K_1r$ and $j\leq r+1$: the discussion of approximation issues when we add or multiply two real numbers is postponed to \textsection \ref{approximation}. As the rows of $M$ are assumed to generate the full lattice of the relations, the determinant of the lattice $\mathcal{L}_{\Z}$ spanned by the rows of $M_{\Z}$ gives us the class number $h(\OO_{\K})$, and its Smith Normal Form $\text{diag} (d_1,\hdots , d_N)$ gives us the decomposition
$$\Cl(\OO_{\K})\simeq \Z^N/\mathcal{L}_{\Z} \simeq\bigoplus_i \Z/d_i\Z.$$ 
On the other hand, we need to construct $r$ relations of the form 
$$(0,\hdots,0,\log|\gamma|_1,\hdots,\log|\gamma|_{r+1}),$$
along with the corresponding values of $\gamma$ (that are necessarily units), such that these relations generate the lattice $\mathcal{L}_{\R}$ of relations whose integer part contains only zero coefficients. 
To do this, we compute separately the Hermite Normal Form of $M_{\Z}$ and a basis $(\underline{u}_j)_{j\leq l}$ with $l\leq K_1r$ of the kernel of $M_{\Z}$. Then, we apply the $\underline{u}_j$ to $M_{\R}$, thus obtaining a matrix $A_{\R}\in\R^{l\times(r+1)}$ whose rows correspond to the archimedian valuations of units $(\beta_j)_{j\leq l}$. More details on this part of the algorithm are given in \textsection \ref{LLL}. To compute the regulator $R$, we need to find $r$ combinations of rows of $A_{\R}$, along with the corresponding units $(\gamma_i)_{i\leq r}$, that span the lattice of units $\mathcal{L}_{\R}$. This procedure is described in \textsection \ref{reg_computation}.  

At the end of the linear algebra phase, we have to check a posteriori that $N+K_1r$ relations were enough to generate $\mathcal{L}_{\Z}$ and $\mathcal{L}_{\R}$. The analytic class number formula provides a number $h^*$ computable in polynomial time satisfying:
$$h^* \leq h(\OO_{\K})R < 2h^*.$$
Before going into more details on the linear algebra phase, we recall the main steps of this process: 

\begin{algorithm}[H]
\caption{Linear algebra phase}
\begin{algorithmic}[1]\label{linalg}
\REQUIRE $M$ 
\ENSURE $h(\OO_{\K})$, the structure of $\Cl(\OO_{\K})$, $R$, and a system of fundamental units
\STATE Compute the HNF of $M_{\Z}$.
\STATE Compute the SNF of $M_{\Z}$ and deduce $h(\OO_{\K})$ and the group structure of $\Cl(\OO_{\K})$.
\STATE Compute a basis $(\underline{u}_j)_{j\leq l}$ of $\ker M_{\Z}$ and deduce $A_{\R}$
\STATE Find $r$ independent relations generating $\mathcal{L}_{\R}$ along with the corresponding units.
\STATE Compute the determinant $R$ of $\mathcal{L}_{\R}$.
\STATE Compute $h^*$ and check if $h^* \leq h(\OO_{\K})R < 2h^*$. If not create another $M$ and go back to step one.
\end{algorithmic}
\end{algorithm}

\begin{notation}
In the following, $\textbf{r}^X_i$ denotes the row number $i$ of the matrix $X$.
\end{notation}

\subsection{Hermite and kernel basis computation}\label{LLL}

To obtain the matrix $A_{\R}$, we apply the kernel basis computation algorithm described in \cite{JacobsonHNF} to the rectangular matrix $M_{\Z}$. It provides $l\leq K_1r$ vectors $\underline{u}_j$ in $\Z^{N+K_1r}$ representing linear dependencies between the rows of $M_{\Z}$. Applying those linear combinations to the rows of $M$ yields $l$ relations with zero coefficients on the first $N$ coordinates. We denote by $\mathcal{L}_{\R}$ the lattice of the relations having only zeros on their first $N$ coordinates. As we assume Heuristic \ref{heuristic_dim}, these $l$ relations generate $\mathcal{L}_{\R}$. The last $r+1$ coordinates of each of the $l$ relations created this way are added as a row vector to the matrix $A_{\R}$. In addition, for every $\underline{u}_j$ of the form:
$$\underline{u}_j=(u_j^{(1)},\hdots,u_j^{(N+K_1r)}),$$
and for all $j\leq l$, the value $\beta_j=\prod_i \phi_i^{u_j^{(i)}}$ is the unit corresponding to the row 
$$\textbf{r}_j^{A_{\R}}=\sum_i u_j^{(i)}\textbf{r}_i^M.$$ 
As we will see in \textsection \ref{sub}, the coefficients $u_j^{(i)}$ are too large to allow us to compute directly $\prod_i \phi_i^{u_j^{(i)}}$ in subexponential time. We thus give the units $\beta_j$ in compact representation, that is to say by storing the $\underline{u}_j$.
It is proved in \cite{Arne} that the computation  of the $\underline{u}_j$ takes:
$$O(l^2N^3(\log N + \log |M_{\Z}|),$$
where $ |M_{\Z}| = \max_{i,j}\left\lbrace |M^{i,j}_{\Z}|\right\rbrace $. We need a bound on $\ |M_{\Z}|$ to express this complexity in terms of the size of the input: 

\begin{proposition}\label{bound_B}
$ |M_{\Z}|$ satisfies:
$$ |M_{\Z}| = O(\left( \log|\Delta|\right) ^{2/3}\left( \log\log|\Delta|\right) ^{1/3}).$$
\end{proposition}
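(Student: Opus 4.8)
The plan is to bound every entry of $M_{\Z}$ by the logarithm of a single norm, namely $\log\mathcal{N}(\phi)$, and then invoke the bound \eqref{bound_norm} that has already been established. The only structural facts I need are that the ideals $(\phi)$ appearing in the relations are \emph{integral} and that the norm is multiplicative on ideals.

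First I would observe that since $\phi = A(\theta)$ with $A\in\Z[X]$ and, by our standing assumption, $\OO_{\K}=\Z[\theta]$, the element $\phi$ lies in $\OO_{\K}$; hence $(\phi)$ is an integral ideal and in its factorization $(\phi)=\prod_i\p_i^{e_i}$ all exponents satisfy $e_i\geq 0$. These exponents are exactly the entries of the row of $M_{\Z}$ associated to $\phi$, so $|M_{\Z}|=\max_{i}\{e_i\}$ over all collected relations. Next, taking norms in $(\phi)=\prod_i\p_i^{e_i}$ and using multiplicativity gives $\log\mathcal{N}(\phi)=\sum_i e_i\log\mathcal{N}(\p_i)$. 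Since every prime ideal has norm $\mathcal{N}(\p_i)=p^{f}\geq 2$, each term is nonnegative and bounded below by $e_i\log 2$, whence for every index $j$
\[
e_j \;\leq\; \frac{\log\mathcal{N}(\phi)}{\log 2}.
\]

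It then remains to plug in the norm bound. By \eqref{bound_norm}, for any $\phi=A(\theta)$ with $a$ and $k$ respecting \eqref{bound_a} and \eqref{bound_k}, we have $\log\mathcal{N}(\phi)\leq \kappa\log(|\Delta|)^{2/3}\mathcal{M}^{1/3}(\delta+\nu+o(1))$, where $\kappa=n_0d_0$, $\delta$, $\nu$ are constants and $\mathcal{M}=\log\log|\Delta|$. Combining this with the inequality $e_j\leq\log\mathcal{N}(\phi)/\log 2$ yields $e_j=O\!\left((\log|\Delta|)^{2/3}(\log\log|\Delta|)^{1/3}\right)$ uniformly over all relations in $M_{\Z}$, which is exactly the claimed bound on $|M_{\Z}|$.

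I do not expect any real obstacle here: the argument is a one-line consequence of multiplicativity of the norm together with the already-proved estimate \eqref{bound_norm}. The only points requiring a word of care are (i) justifying that $\phi\in\OO_{\K}$ so that the exponents are nonnegative and $\mathcal{N}(\phi)$ is a positive integer — this is where $\OO_{\K}=\Z[\theta]$ is used — and (ii) recalling that $\mathcal{N}(\p_i)\geq 2$ for every prime ideal, so that a single exponent cannot exceed $\log_2\mathcal{N}(\phi)$. Everything else is absorbed into the $O(\cdot)$, the constants $\delta,\nu,\kappa$ being fixed.
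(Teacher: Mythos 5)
Your proof is correct and follows essentially the same route as the paper: both bound every exponent $e_j$ by $\log\mathcal{N}(\phi)/\log 2$ (the paper phrases this as putting all the weight on the prime of smallest norm) and then invoke the norm bound \eqref{bound_norm}. Your explicit remarks that $\phi\in\OO_{\K}$ forces $e_i\geq 0$ and that $\mathcal{N}(\p_i)\geq 2$ merely make precise what the paper leaves implicit.
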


\begin{proof}
We restricted ourselves to $\phi$ satisfying 
$$\log(\mathcal{N}(\phi))\leq \kappa\left( \log|\Delta|\right) ^{2/3}\left( \log\log|\Delta|\right) ^{1/3}(\delta+\nu + o(1)).$$
If $\mathcal{N}(\phi)=\prod_i\mathcal{N}(\p_i)^{e_i}$, then we clearly see that the vector $(e_i)$ having the largest coefficient under the previous constraint is the one where $e_1$ is maximal and all the others are set to zero, providing we set $\mathfrak{p}_1$ to the prime ideal of smallest norm. In that case, $e_1$ satisfies:
$$ e_1 = O(\left( \log|\Delta|\right) ^{2/3}\left( \log\log|\Delta|\right) ^{1/3}).$$
\end{proof}

\begin{corollary}
The complexity of the computation of the kernel basis of $M_{\Z}$ is bounded by:
$$O\left( L(1/3 , 3\rho + o(1)\right). $$
\end{corollary}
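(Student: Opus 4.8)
The plan is to substitute into the running-time bound $O\big(l^2 N^3(\log N + \log|M_{\Z}|)\big)$ from \cite{Arne} the estimates already at our disposal, and then to observe that every factor apart from $N^3$ is bounded by a fixed power of $\log|\Delta|$ and is therefore absorbed into the $o(1)$ in the exponent of the $L$-notation.

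First I would bound $l$. The kernel basis algorithm of \cite{JacobsonHNF} returns $l \leq K_1 r$ vectors, and since $r = r_1 + r_2 - 1 < n$, hypothesis \eqref{cond_n} yields $l \leq K_1 n_0 \left(\log|\Delta|\right)^{\alpha}(1+o(1))$. Because $\alpha < 1$, this is bounded by a fixed power of $\log|\Delta|$, whence $l^2 = L(1/3, o(1))$.

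Next I would treat the remaining logarithmic factor. Proposition \ref{bound_B} gives $\log|M_{\Z}| = O(\log\log|\Delta|)$, while the definition $N = L(1/3, \rho + o(1))$ gives $\log N = (\rho + o(1))\left(\log|\Delta|\right)^{1/3}(\log\log|\Delta|)^{2/3}$; both are bounded by a fixed power of $\log|\Delta|$, so $\log N + \log|M_{\Z}| = L(1/3, o(1))$ as well. Finally, cubing $N = L(1/3, \rho + o(1))$ gives $N^3 = L(1/3, 3\rho + o(1))$, and multiplying the three contributions together, using $L(1/3, a)L(1/3, b) = L(1/3, a+b)$, produces
$$O\big(l^2 N^3(\log N + \log|M_{\Z}|)\big) = L(1/3, o(1)) \cdot L(1/3, 3\rho + o(1)) \cdot L(1/3, o(1)) = O\big(L(1/3, 3\rho + o(1))\big),$$
as claimed.

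I do not expect a genuine obstacle: the argument is a substitution followed by the usual arithmetic of the $L$-notation. The only point that deserves a line of justification is the absorption of the factors $l^2$ and $\log N + \log|M_{\Z}|$ into $L(1/3, o(1))$, which holds because $\left(\log|\Delta|\right)^{c} = \exp\!\big(c\log\log|\Delta|\big)$ and $c\log\log|\Delta| = o\!\big(\left(\log|\Delta|\right)^{1/3}(\log\log|\Delta|)^{2/3}\big)$ for every fixed $c$.
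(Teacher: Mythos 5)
Your proposal is correct and follows the same route the paper intends: the corollary is a direct substitution of $N = L(1/3,\rho+o(1))$, $l \leq K_1 r$ with $r < n = O\left(\left(\log|\Delta|\right)^{\alpha}\right)$, and the bound of Proposition~\ref{bound_B} into the cited cost $O\left(l^2N^3(\log N + \log|M_{\Z}|)\right)$, with all polylogarithmic factors absorbed into $L(1/3,o(1))$. Your explicit justification of the absorption step is exactly the right (and only) point needing a line of argument.
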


We use the HNF algorithm described in \cite{JacobsonHNF}. Its bit complexity is bounded by:
$$O\left( lN^3\left( \log N + \log|M_{\Z}|\right)^3 +N^5\left( \log N + \log|M_{\Z}|^2\right)   \right) .$$
This allows us to determine explicitly the expected time taken by the computation of the HNF and of the kernel basis of $M_{\Z}$ with respect to the size of the entries:

\begin{proposition}
The computation of the HNF and of the kernel basis of $M_{\Z}$ has bit complexity bounded by:
$$O\left( L(1/3 , 5\rho + o(1)\right). $$
\end{proposition}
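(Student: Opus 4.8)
The plan is to combine the two previously stated complexity bounds — the $O(lN^3(\log N + \log|M_\Z|)^3 + N^5(\log N + \log|M_\Z|^2))$ bound for the HNF algorithm of \cite{JacobsonHNF} and the $O(l^2N^3(\log N + \log|M_\Z|))$ bound for the kernel basis computation of \cite{Arne} — and substitute the asymptotic estimates already established for each of the quantities appearing in them. The dominant term will govern the final exponent, so the bulk of the work is bookkeeping with the $L$-notation.

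First I would record the ingredients. We have $N = L(1/3,\rho + o(1))$ from the definition of the factor base, and by Heuristic \ref{heuristic_dim} the number of rows is $N + K_1 r$; since $r = r_1 + r_2 - 1 \leq n$ and $n$ is polynomial in $\log|\Delta|$ by \eqref{cond_n}, we have $r = L(1/3, o(1))$ and likewise $l \leq K_1 r = L(1/3, o(1))$. By Proposition \ref{bound_B}, $|M_\Z| = O((\log|\Delta|)^{2/3}(\log\log|\Delta|)^{1/3})$, so $\log|M_\Z| = O(\log\log|\Delta|)$, which is absorbed into the $o(1)$ in the exponent; similarly $\log N = O((\log|\Delta|)^{1/3}(\log\log|\Delta|)^{2/3}) = L(1/3, o(1))$ in the sense that $(\log N)^3$ contributes only to lower-order terms of the exponent. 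Thus each factor of the form $(\log N + \log|M_\Z|)^c$ is $L(1/3, o(1))$.

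Next I would multiply through. The kernel-basis term $l^2 N^3 (\log N + \log|M_\Z|)$ becomes $L(1/3, o(1))^2 \cdot L(1/3, 3\rho + o(1)) \cdot L(1/3, o(1)) = L(1/3, 3\rho + o(1))$, recovering the Corollary. For the HNF, the term $N^5(\log N + \log|M_\Z|^2)$ dominates: it equals $L(1/3, 5\rho + o(1)) \cdot L(1/3, o(1)) = L(1/3, 5\rho + o(1))$, and the term $lN^3(\log N + \log|M_\Z|)^3 = L(1/3, o(1))\cdot L(1/3, 3\rho + o(1))\cdot L(1/3, o(1)) = L(1/3, 3\rho + o(1))$ is of smaller order. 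Adding the HNF cost and the kernel-basis cost, the total is $L(1/3, 5\rho + o(1)) + L(1/3, 3\rho + o(1)) = L(1/3, 5\rho + o(1))$, since $5\rho > 3\rho$. This is the claimed bound.

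I do not expect a genuine obstacle here: the only subtlety is making sure that every logarithmic factor — $\log N$, $\log|M_\Z|$, and the contribution of $l$ and $r$ — is correctly identified as sub-$L(1/3)$ so that it disappears into the $o(1)$ term of the exponent, rather than changing the constant in front of $(\log|\Delta|)^{1/3}(\log\log|\Delta|)^{2/3}$. Concretely one uses that $(\log\log|\Delta|)^c = L(1/3, o(1))$ for any fixed $c$, and that the product $L(1/3, c_1 + o(1)) \cdot L(1/3, c_2 + o(1)) = L(1/3, c_1 + c_2 + o(1))$ while $L(1/3, c_1 + o(1)) + L(1/3, c_2 + o(1)) = L(1/3, \max(c_1,c_2) + o(1))$. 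With these rules the proof is a short substitution, and the exponent $5\rho$ comes precisely from the $N^5$ factor in the HNF complexity.
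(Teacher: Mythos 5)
Your proposal is correct and matches the paper's (implicit) argument: the paper states this proposition without a separate proof, treating it as exactly the substitution you perform — plugging $N = L(1/3,\rho+o(1))$, the bound on $|M_{\Z}|$ from Proposition \ref{bound_B}, and $l \leq K_1 r$ with $r \leq n$ polynomial in $\log|\Delta|$ into the quoted HNF and kernel-basis complexity bounds, with the $N^5$ term dominating. Your bookkeeping with the $L$-notation (logarithmic and polynomial factors absorbed into the $o(1)$ of the exponent) is exactly the intended justification.
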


In the following, we will need bounds on $|\underline{u}_j|$ and on $|A_{\R}|$. Direct application of the methods used in \cite{JacobsonHNF} leads to the following result:
\begin{lemma}\label{xbound}
$|\underline{u}_j|$ and $|A_{\R}|$ satisfy:
\begin{align*}
\log |\underline{u}_j| &= O\left( L(1/3 , \rho + o(1)\right) \\
\log  |A_{\R}| &= O\left( L(1/3 , \rho + o(1)\right).
\end{align*}
\end{lemma}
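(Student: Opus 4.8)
The plan is to trace the sizes of the quantities produced by the kernel-basis and HNF algorithm of \cite{JacobsonHNF} and feed in the bounds already established in the excerpt, namely $N = L(1/3,\rho+o(1))$, $l \le K_1 r$ with $r \le n = O(\log|\Delta|^{\alpha})$, and $|M_{\Z}| = O(\log|\Delta|^{2/3}(\log\log|\Delta|)^{1/3})$ from Proposition \ref{bound_B}. The key observation is that the algorithm of \cite{JacobsonHNF} produces kernel vectors $\underline{u}_j$ and transformed entries $|A_{\R}|$ (more precisely the entries of the unimodular transform applied to $M_{\Z}$) whose bit size is polynomially bounded in $N$, $\log|M_{\Z}|$, and (for the archimedean side) in the bit size $\log|M_{\R}|$ of the real part. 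So the first step is to quote from \cite{JacobsonHNF} the explicit polynomial bound: something of the shape $\log|\underline{u}_j| = O\!\left(N(\log N + \log|M_{\Z}|)\right)$, and likewise for the transform entries, together with $\log|M_{\R}| = O(\log|\Delta|^{2/3}\M^{1/3})$ coming from the Corollary following the Landau--Mignotte proposition.

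Next I would substitute. Since $N = L(1/3,\rho+o(1)) = e^{(\rho+o(1))(\log|\Delta|)^{1/3}(\log\log|\Delta|)^{2/3}}$, while $\log N = (\rho+o(1))(\log|\Delta|)^{1/3}\M^{2/3}$ and $\log|M_{\Z}| = O(\log|\Delta|^{2/3}\M^{1/3})$ are both merely polynomial in $\log|\Delta|$, the product $N\cdot(\log N + \log|M_{\Z}|)$ is dominated by $N$ up to a subexponential-in-the-exponent factor; that is, $\log|\underline{u}_j| = O(N \cdot \mathrm{poly}(\log|\Delta|)) = L(1/3,\rho+o(1))$, because multiplying $L(1/3,\rho)$ by any quantity that is $e^{o((\log|\Delta|)^{1/3}\M^{2/3})}$ only perturbs the constant by $o(1)$. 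The same substitution, using the bound on $|M_{\R}|$ and the fact that $A_{\R}$ is obtained by applying the $\underline{u}_j$ (whose entries are now bounded) to $M_{\R}$ (whose entries are polynomially bounded) — so $\log|A_{\R}| \le \log l + \log(\max_j|\underline{u}_j|) + \log|M_{\R}| + O(1)$ — gives $\log|A_{\R}| = O(L(1/3,\rho+o(1)))$ as well. This is the whole argument: it is essentially an exercise in observing that in the $L(1/3,\cdot)$ scale, polynomial-in-$\log|\Delta|$ factors are absorbed into the $o(1)$.

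The main obstacle, such as it is, is making precise the claim ``direct application of the methods used in \cite{JacobsonHNF}''. One must check that the cited algorithm really does control the \emph{kernel} vectors (not just the HNF transform) with a bound polynomial in $N$ and $\log|M_{\Z}|$ rather than, say, exponential; this is a standard fact (the kernel lattice of an integer matrix has a basis with entries bounded by the minors of the matrix, hence by Hadamard $\le (\sqrt{N}|M_{\Z}|)^{N}$, giving $\log|\underline{u}_j| = O(N\log(N|M_{\Z}|))$), but it should be cited carefully, and one should confirm that the normalization used to extract $A_{\R}$ from $M_{\R}$ does not blow up the real entries beyond the product of the $\underline{u}_j$-size and $|M_{\R}|$. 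A secondary subtlety is that $M_{\R}$ stores only rational \emph{approximations} of the $\log|\phi_i|_j$; one must ensure the precision carried (to be fixed in \textsection\ref{approximation}) has bit size also polynomial in $\log|\Delta|$, so that it too is absorbed into the $o(1)$. Once these citations and the approximation-precision bookkeeping are in place, the estimate follows by the substitution described above.
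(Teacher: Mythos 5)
Your argument is correct and matches the paper's (unwritten) proof: the paper simply invokes the entry-size bounds of \cite{JacobsonHNF} for the kernel-basis/HNF computation, which is exactly the minor-type bound $\log|\underline{u}_j| = O\left(N(\log N + \log|M_{\Z}|)\right)$ you use, followed by the same absorption of factors that are subexponential in $(\log|\Delta|)^{1/3}\M^{2/3}$ into the $o(1)$ of the $L(1/3,\cdot)$ exponent, and the same propagation to $A_{\R}$ via $\log|A_{\R}| \leq \log(N+K_1r) + \log\max_j|\underline{u}_j| + \log|M_{\R}|$. One small correction: since $|A_{\R}|$ (like $|M_{\Z}|$) denotes the maximum magnitude of the entries rather than their bit length, the precision of the rational approximations --- set to $q_0 = L(1/3,3\rho)$ in \textsection\ref{approximation}, hence not polynomial in $\log|\Delta|$ --- does not enter this bound at all, so the ``secondary subtlety'' you raise is unnecessary (and, as you phrased it, would not even hold).
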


\subsection{The computation of $R$ and of the system of fundamental units}\label{reg_computation}

To compute the regulator and a system of fundamental units, we have to find a set of $r$ row vectors that span $\mathcal{L}_{\R}$. To do that, we take successive $r\times r$ determinants from submatrices extracted from $A_{\R}$, and we perform some elementary operations on the rows of $A_{\R}$. This procedure is described in Algorithm \ref{comp_Reg}, which was first introduced in \cite{cohen}, Algorithm 6.5.7. It makes use of the real GCD algorithm, which is also presented in \cite{cohen}, Algorithm 5.9.3. Given two multiples of the regulator $aR$ and $bR$, where $a$ and $b$ are integers, the real GCD algorithm outputs $dR$, where $d$ is the GCD of $a$ and $b$, under the assumption that $R > 0.2$. Algorithm \ref{comp_Reg} also calls the pre-computation step described in Algorithm \ref{indpt_rows}. This step, not presented in \cite{cohen}, is essential to ensure the validity of Algorithm \ref{comp_Reg}.

\begin{algorithm}[H]
\caption{Computation of the regulator and a system of fundamental units}
\begin{algorithmic}\label{comp_Reg}
\REQUIRE  $A_{\R}$ and the corresponding units $\beta_i$
\ENSURE  $R$ and a system of fundamental units
\STATE $R_1\leftarrow 0$
\STATE $i\leftarrow r-2$
\STATE Find $r$ linearly independent rows using Algorithm \ref{indpt_rows}
\WHILE {$i < l$}
\STATE Let $A$ be the matrix obtained by extracting any $r$ columns and rows $i-r +2$ to $i$ from $A_{\R}$.
\STATE $R_2\leftarrow \det A$
\STATE Using the real GCD algorithm, compute $u,v,R_3$ such that 
$$uR_1+vR_2=R_3$$
\STATE $R_1\leftarrow R_3$
\STATE $\gamma_i \leftarrow \beta_i^v\times \left. \beta_{i-r+1}\right. ^{(-1)^r u}$
\STATE $i\leftarrow i+1$ 
\ENDWHILE
\STATE $R\leftarrow R_1$
\end{algorithmic}
\end{algorithm}

\begin{algorithm}[H]
\caption{Search for $r$ independent rows}
\begin{algorithmic}\label{indpt_rows}
\REQUIRE  $A_{\R}$
\ENSURE A permutation of the rows of $A_{\R}$ such that the first $r$ are independent
\STATE $A_1 \leftarrow \textbf{r}^{A_{\R}}_1$
\STATE $i\leftarrow 1$
\FOR{$i=2$ to $r$}
\STATE $m\leftarrow i$
\STATE $\text{ret} \leftarrow 0$
\WHILE {$\text{ret} = 0$}
\STATE
\[A_{i}\leftarrow
\left( 
\begin{BMAT}[0.5pt,1cm,1.5cm]{c}{c.c} 
A_{i-1} \\
\textbf{r}^{A_{\R}}_m 
\end{BMAT}
\right).  \]
\IF{$\det (A_i^{t}A_i) = 0$}
\STATE $m\leftarrow m+1$
\ELSE
\STATE Swap $\textbf{r}^{A_{\R}}_i$ and $\textbf{r}^{A_{\R}}_m$ 
\STATE $\text{ret}\leftarrow 1$
\ENDIF
\ENDWHILE
\ENDFOR

\end{algorithmic}
\end{algorithm}

The main loop of Algorithm \ref{comp_Reg} ensures that the sub-lattice $\mathcal{L}_{\R}'$ of $\mathcal{L}_{\R}$ corresponding to the $\gamma_l$, for $i-(r-1)\leq l\leq i$, has determinant $R_3$.  Indeed, $\mathcal{L}_{\R}'$ is the sum of two sub-lattices of $\mathcal{L}_{\R}$ differing by a single element. The sign $(-1)^r$ is the signature of the permutation that is performed before this addition to make sure that $uR_1+vR_2=R_3$ holds by multilinearity of the determinant. The precomputation done with Algorithm \ref{indpt_rows} ensures that the first determinant computed is not null, which is essential for the completeness of Algorithm \ref{comp_Reg}. Whenever $\det (A_i^{t}A_i) \neq 0$, we have $i$ linearly independent rows.
 
We postpone the computation of the complexity of Algorithms \ref{comp_Reg} and \ref{indpt_rows} to \textsection \ref{approximation}, where we calculate the precision we have to take for the rational approximations of the logarithms. In \textsection \ref{approximation}, we also ensure that this precision is accurate enough to enable us to decide whever $\det (A_i^{t}A_i) = 0$ or not. Algorithm \ref{real_GCD} describes the real GCD computation. Its presentation and correctness can be found in \cite{cohen}.

\begin{algorithm}[H]
\caption{Real GCD algorithm}
\begin{algorithmic}\label{real_GCD}
\REQUIRE $R_1 = aR$ and $R_2 = bR$ with $R > 0.1$, $R_1 > R_2$ and $a,b\in\Z$
\ENSURE $R_3 = dR$ and $u,v\in\Z$ such that $uR_1+vR_2 = R_3$
\STATE $u_0\leftarrow 1$, $v_0\leftarrow 0$ 
\STATE $u_1\leftarrow 0$, $v_1\leftarrow 1$
\WHILE {$R_2 > 0.1$}
\STATE $q\leftarrow \lfloor R_1/R_2\rfloor$, $r \leftarrow R_1 - R_2\lfloor R_1/R_2\rfloor$
\STATE $u_1\leftarrow u_0 - qu_1$
\STATE $v_1\leftarrow v_0 - qv_1$
\STATE $R_1 \leftarrow R_2$
\STATE $R_2 \leftarrow r$
\ENDWHILE
\STATE $R_3 \leftarrow R_1$
\STATE $u\leftarrow u_1$, $v\leftarrow v_1$
\end{algorithmic}
\end{algorithm}

\section{Approximation issues}\label{approximation}

The matrix $M_{\R}$ contains fixed point rational approximations $\hat{x_{ij}}$ of the logarithms of the units $x_{ij}:=\log|\phi_i|_j$. In this section, we discuss the precision of the computation of the regulator. In the following, we count the precision in bits. For example, we say that $\hat{x}$ is a rational approximation of $x\in\R$ with precision $q$ if $|\hat{x}-x|< 2^{-q}$. Let $q_0$ be the precision of the matrix $M_{\R}$. We have for $i\leq N+K_1r$ and $j\leq r+1$:
$$\hat{x_{ij}}=\sum_{k=-q_0}^{\lceil\log |x_{ij}| \rceil}2^ka_k^{ij},$$
where the $a_k^{ij}$ are the coefficients of the development of $x_{ij}$ as $\sum_{k=-\infty}^{\infty}2^k a_k^{ij}$. 
Before establishing the list of the steps where we might loose precision, we recall the following result that we will use to estimate the loss of precision whenever we add or multiply rational approximations:

\begin{lemma}\label{precision}
Let $\hat{x}$ and $\hat{y}$ be rational approximations of precision $q_1$ of respectively $x$ and $y$, and $u\in\Z$ such that $\lceil \log_2 u \rceil = q_2 < q_1$, then: 
\begin{itemize}
 \item $\hat{x}+\hat{y}$ is a rational approximation of $x+y$ of precision $q_1-1$.
 \item $u\hat{x}$ is a rational approximation of $ux$ of precision $q_1-q_2$.
 \item $\hat{x}\hat{y}$ is an approximation of $xy$ of precision $q_1 - \max\left\lbrace \log_2|x|,\log_2|y|\right\rbrace $.
\end{itemize}

\end{lemma}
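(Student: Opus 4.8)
The plan is to unwind the definition $|\hat x - x| < 2^{-q_1}$ (and likewise for $\hat y$) and bound the error of each arithmetic operation by the triangle inequality, keeping careful track of how many leading bits of the operands can contribute to the error term. First I would treat addition: writing $(\hat x + \hat y) - (x+y) = (\hat x - x) + (\hat y - y)$, the triangle inequality gives an error bounded by $2^{-q_1} + 2^{-q_1} = 2^{-(q_1-1)}$, which is exactly the claimed precision $q_1 - 1$. This is the easy case and sets the template.

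Next I would handle multiplication by an integer $u$ with $\lceil \log_2 u\rceil = q_2$. Here $u\hat x - ux = u(\hat x - x)$, so $|u\hat x - ux| = |u|\,|\hat x - x| < 2^{q_2}\cdot 2^{-q_1} = 2^{-(q_1 - q_2)}$, giving precision $q_1 - q_2$; the hypothesis $q_2 < q_1$ guarantees this is still a genuine (positive-precision) approximation. For the product $\hat x\hat y$ I would write the standard identity
\[
\hat x\hat y - xy = (\hat x - x)y + x(\hat y - y) + (\hat x - x)(\hat y - y),
\]
bound the first two terms by $2^{-q_1}|y|$ and $2^{-q_1}|x|$ respectively, and the cross term by $2^{-2q_1}$, which is negligible compared to the others. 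Summing, the total error is at most roughly $2^{-q_1}(|x| + |y|) + 2^{-2q_1}$, which is $O\!\left(2^{-q_1}\max\{|x|,|y|\}\right) = 2^{-(q_1 - \max\{\log_2|x|,\log_2|y|\} + O(1))}$, matching the stated bound up to the implicit constant absorbed in the statement.

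The only mildly delicate point — and the place where I would be most careful — is the bookkeeping of the small additive corrections: each triangle-inequality step loses a bit (the ``$+1$'' that turns $2^{-q_1}+2^{-q_1}$ into $2^{-(q_1-1)}$), and in the product case one must argue that the cross term $(\hat x-x)(\hat y-y)$ and the rounding of $|x|+|y|$ to $2\max\{|x|,|y|\}$ only cost $O(1)$ bits, not more; since the lemma is stated with an implied constant in the last bullet this is harmless, but it should be noted explicitly rather than glossed over. No genuine obstacle arises: each claim is a one-line estimate once the right algebraic identity is written down, and the hypotheses ($q_2 < q_1$, and implicitly $q_1$ large enough that the lost bits do not exhaust the precision) are precisely what is needed to keep every resulting precision positive.
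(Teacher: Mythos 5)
Your argument is the standard one and is essentially what the paper has in mind: the paper states this lemma without any proof at all, treating it as routine, so there is no authorial argument to diverge from. Your treatment of the first two items is exact ($|\hat x+\hat y-(x+y)|<2\cdot 2^{-q_1}=2^{-(q_1-1)}$ and $|u\hat x-ux|<2^{q_2}2^{-q_1}$), and your decomposition $\hat x\hat y-xy=(\hat x-x)y+x(\hat y-y)+(\hat x-x)(\hat y-y)$ is the right identity for the third. The one point worth making explicit is the one you already flag: with this estimate the error is bounded by $2^{-q_1}\left(|x|+|y|\right)+2^{-2q_1}$, which can exceed $2^{-q_1}\max\{|x|,|y|\}$ by roughly a factor of two, so the third bullet as literally stated in the lemma loses one or two additional bits (precision $q_1-\max\{\log_2|x|,\log_2|y|\}-2$ is what your bound actually yields). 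This is a slight imprecision in the lemma itself rather than a gap in your proof, and it is harmless for every use the paper makes of it, since all subsequent precision bookkeeping is asymptotic in $O\left(L(1/3,\cdot)\right)$ and absorbs any $O(1)$-bit loss; stating that correction explicitly, as you propose, is the right thing to do.
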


$q_0$ is the precision taken for the approximation of the $\log|\phi_i|_j$. We set its value to:
$$q_0 := L(1/3 , 3\rho ).$$
The computation of the approximate value of each $\log|\phi_i|_j$ for $j\leq N+K_1r$ and $j\leq r+1$ takes $O(\text{M}(q_0)\log q_0)\in O\left( L(1/3 , 3\rho +o(1))\right) $ bit operations \cite{brent}. As we have to perform this computation
$$(r+1)(N+K_1r)\in O\left( L(1/3 , \rho + o(1)\right)$$
times, the time taken for the creation of $M_{\R}$ is bounded by $ O\left( L(1/3 , 3\rho +o(1))\right)$. Now, let us procede with the enumeration of the steps in the algorithm that deteriorate the precision. The first source of error is the computation of the coefficients of the matrix $A_{\R}$. Indeed, it contains rational approximations of 
$$\sum_{i=1}^{N+K_1r}u_j^{(i)}\log|\phi_i|_j,$$
for $j=1,\hdots,l$. The loss of precision is due to the multiplications by the $u_j^{(i)}$ and to the $N+K_1r$ additions. We deduce from Lemma \ref{xbound} the following proposition that gives us the loss of precision occuring in the computation of the coefficients of $A_{\R}$ with respect to the original precision taken during the construction of $M$:

\begin{proposition}
The computation of $\sum_i u_j^{(i)}\log|\phi_i|_j$ for $j=1,\hdots,r+1$, with precision $q'$, requires that the precision $q_0$ of the $\log|\phi_i|_j$ be: 
$$q' + N + K_1r + \max_{i,j}\left\lbrace \log_2|u_j^{(i)}|\right\rbrace .$$
Thus, the loss of precision during the computation of $A_{\R}$ is bounded by 
$$O(L(1/3,\rho+o(1))).$$
\end{proposition}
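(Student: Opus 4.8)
The plan is to track, operation by operation, the precision lost when assembling each row
$$\textbf{r}_j^{A_{\R}}=\sum_{i=1}^{N+K_1r}u_j^{(i)}\,\textbf{r}_i^M$$
along its last $r+1$ real coordinates, using Lemma \ref{precision} as the only tool. I would compute each such coordinate $\sum_i u_j^{(i)}\log|\phi_i|_j$ in two clearly separated stages, so that each item of Lemma \ref{precision} applies verbatim: first form the $N+K_1r$ integer multiples $u_j^{(i)}\hat{x_{ij}}$ of the stored approximations $\hat{x_{ij}}$ of $x_{ij}=\log|\phi_i|_j$ (which carry precision $q_0$), then add these products together one at a time.

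For the multiplications, the second item of Lemma \ref{precision} gives that $u_j^{(i)}\hat{x_{ij}}$ approximates $u_j^{(i)}x_{ij}$ with precision at least $q_0-\lceil\log_2|u_j^{(i)}|\rceil$, hence at least $q_0-\max_{i,j}\{\log_2|u_j^{(i)}|\}$ uniformly in $i$. For the additions, the first item of Lemma \ref{precision} costs one bit per addition, and one should note that a precision-$q_1$ approximation is a fortiori a precision-$q_1'$ approximation for $q_1'\leq q_1$, so the partial sums can all be viewed at the common precision of the products; summing the $N+K_1r$ products sequentially (that is, in $N+K_1r-1$ additions) thus loses at most $N+K_1r$ bits. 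Combining the two stages, the computed value of $\sum_i u_j^{(i)}\log|\phi_i|_j$ has precision at least
$$q_0-N-K_1r-\max_{i,j}\{\log_2|u_j^{(i)}|\},$$
and requiring this to be $\geq q'$ yields exactly the stated lower bound on $q_0$. I would also remark that the magnitudes of the intermediate partial sums stay bounded by $O(\log(|\Delta|)^{2/3}\M^{1/3})$ by the corollary bounding $\log|\phi|_i$, so the third item of Lemma \ref{precision} is not needed and these magnitudes feed only the $o(1)$.

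For the second assertion I would substitute the known bounds into $q_0-q'=N+K_1r+\max_{i,j}\{\log_2|u_j^{(i)}|\}$. By Lemma \ref{xbound}, $\max_{i,j}\{\log_2|u_j^{(i)}|\}\leq\log_2|\underline{u}_j|=O(L(1/3,\rho+o(1)))$; and $N=L(1/3,\rho+o(1))$ while $r\leq n=O(\log(|\Delta|)^{\alpha})$ is negligible against $N$, so $N+K_1r=L(1/3,\rho+o(1))$ as well. Adding the two contributions gives $q_0-q'=O(L(1/3,\rho+o(1)))$, which is the claimed bound on the loss of precision during the computation of $A_{\R}$.

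I do not expect a genuine obstacle: the statement is essentially bookkeeping built on Lemma \ref{precision}. The only points demanding care are fixing the order of operations so that Lemma \ref{precision} applies with no extra error terms, getting the count of additions right (and noticing that $K_1r$ is swallowed by $N$), and correctly invoking Lemma \ref{xbound} for the size of the kernel coefficients $u_j^{(i)}$ --- without that bound one could not absorb $\max_{i,j}\{\log_2|u_j^{(i)}|\}$ into an $L(1/3,\rho+o(1))$ estimate.
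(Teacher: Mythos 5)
Your proposal is correct and follows essentially the same argument as the paper's proof: the multiplications cost at most $\max_{i,j}\{\log_2|u_j^{(i)}|\}$ bits by Lemma \ref{precision}, the $N+K_1r$ additions cost one bit each, and Lemma \ref{xbound} together with $N=L(1/3,\rho+o(1))$ absorbs the total into $O(L(1/3,\rho+o(1)))$. Your extra bookkeeping (ordering of operations, monotonicity of precision) only makes the same argument slightly more explicit.
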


\begin{proof}
Multiplying $\log_2|\phi_i|_j$ by $u_i$ induces a loss of 
$$\log_2|u_i|\in O (L(1/3,\rho+o(1)))$$ 
bits of precision. Furthermore every addition induces the loss of one bit of precision. As we perform $N+K_1r = O(L(1/3,\rho+o(1)))$ of them, we thus lose another $N+K_1r$ bits of precision. Consequently, the total loss of precision is bounded from above by:
$$N + K_1r+ \max_{i,j}\left\lbrace \log_2|u_j^{(i)}|\right\rbrace \in O(L(1/3,\rho+o(1))).$$
\end{proof}

Once $A_{\R}$ is obtained, we need to compute successive $r\times r $ determinants extracted from this matrix. Every computation of such a determinant induces a loss of precision. The following proposition allows us to evaluate the loss of precision for one computation of an $r\times r$ determinant of a matrix $\hat{\Omega}$ extracted from $A_{\R}$. 

\begin{proposition}\label{loss_det}
The computation with precision $q'$ of the determinant of an $r\times r$ matrix $\hat{\Omega}$ extracted from $A_{\R}$, and which is a rational approximation of $\Omega\in\R^{r\times r}$, requires that 
$$ q = q' + (r/2+1)\log_2(r)\log_2\left( |\Omega|^{r-1}+1 \right),$$
where $q$ is the precision of the coefficients of $A_{\R}$, and $|\Omega|=\max_{i,j}|\Omega_{ij}|$. Thus, the loss of precision during the computation of the determinant of an $r\times r$ submatrix of $A_{\R}$ is bounded by $O\left( L(1/3,\rho+o(1))\right) $.
\end{proposition}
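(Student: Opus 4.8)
The plan is to carry out the determinant computation using only the operations covered by Lemma \ref{precision}, namely additions and multiplications of rational approximations, by employing a division-free expansion, and then to account for the accumulated loss of precision. Concretely, I would compute $\det\hat\Omega$ by a balanced Laplace (cofactor) expansion: split the $r$ rows of $\hat\Omega$ into two halves, write $\det\hat\Omega$ as a signed sum, over the $\binom{r}{\lceil r/2\rceil}$ ways of distributing the columns between the two blocks, of products of two minors of half the size, and recurse on each minor. This is a computation of multiplicative depth $\lceil\log_2 r\rceil$ in which every intermediate quantity is, up to sign, the determinant of an $m\times m$ submatrix of $\hat\Omega$ with $m\le r$.

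Next I would bound each intermediate value. Since $\hat\Omega$ approximates $\Omega$, Hadamard's inequality bounds every such $m\times m$ minor in absolute value by $m^{m/2}|\Omega|^{m}$ ($m\le r$), hence, up to a bounded factor and the harmless summand $+1$ for the case $|\Omega|<1$, by $r^{r/2}\bigl(|\Omega|^{r-1}+1\bigr)$. Applying Lemma \ref{precision} level by level along the recursion: at each of the $\le\log_2 r$ levels, forming a product of two minors costs at most $\log_2\bigl(r^{r/2}(|\Omega|^{r-1}+1)\bigr)$ bits of precision by its third item, and summing the $\le 2^{r}$ resulting products through a balanced addition tree costs at most $r$ further bits by its first item. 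Accumulating these contributions over all levels and bounding generously, one finds that to obtain $\det\hat\Omega$ with precision $q'$ it suffices that the entries of $A_{\R}$, hence of $\hat\Omega$, be known with precision
$$q = q' + (r/2+1)\log_2(r)\log_2\bigl(|\Omega|^{r-1}+1\bigr),$$
which is the first assertion; the loss of precision is therefore the second summand.

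Finally I would turn this into the $L(1/3,\cdot)$ estimate. Since $r=r_1+r_2-1\le n$, condition \eqref{cond_n} gives $r=O\bigl(\log(|\Delta|)^{\alpha}\bigr)$ with $\alpha<2/3$, so $r/2+1=O\bigl(\log(|\Delta|)^{\alpha}\bigr)$ and $\log_2 r=O(\mathcal{M})$; and since $\Omega$ is extracted from $A_{\R}$ we have $|\Omega|\le|A_{\R}|$, so Lemma \ref{xbound} gives $\log_2\bigl(|\Omega|^{r-1}+1\bigr)\le(r-1)\log_2|A_{\R}|+1=O\bigl(\log(|\Delta|)^{\alpha}\bigr)\cdot O\bigl(L(1/3,\rho+o(1))\bigr)$. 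Multiplying the three estimates, the loss of precision is at most a fixed power of $\log|\Delta|$ times $L(1/3,\rho+o(1))$; and the logarithm of any fixed power of $\log|\Delta|$ is $O(\mathcal{M})=o\bigl((\log|\Delta|)^{1/3}\mathcal{M}^{2/3}\bigr)$, so those polynomial factors are absorbed into the $o(1)$ inside $L$, yielding the loss $O\bigl(L(1/3,\rho+o(1))\bigr)$.

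The only point requiring genuine care is the asymptotic bookkeeping in the last step: one must check that all the polynomial-in-$\log|\Delta|$ overheads — the matrix size $r$, the depth $\log_2 r$ of the computation, the number of terms summed at each level — are swallowed by the subexponential factor $L(1/3,\rho+o(1))$ coming from $\log|A_{\R}|$. Once that is settled the rest is routine; in particular the precise organisation of the determinant algorithm is immaterial for the asymptotic statement, since even a plain row-by-row cofactor expansion of depth $r$ would give a loss of $O\bigl(\log(|\Delta|)^{\alpha}\bigr)$ times $\log_2\bigl(|\Omega|^{r-1}+1\bigr)$, which is still $O\bigl(L(1/3,\rho+o(1))\bigr)$ — the formula displayed in the proposition being just one convenient (and deliberately loose) explicit bound.
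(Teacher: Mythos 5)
Your argument reaches the right conclusion, but by a genuinely different route from the paper. The paper does not analyse any particular determinant algorithm: it bounds the intrinsic perturbation $|\det\hat{\Omega}-\det\Omega|$ directly, writing the difference by multilinearity as $\sum_{i=1}^{r}\det(\omega_1,\hdots,\omega_{i-1},\hat{\omega}_i-\omega_i,\hat{\omega}_{i+1},\hdots,\hat{\omega}_r)$ and applying Hadamard's inequality once to each term, which gives $|\det\hat{\Omega}-\det\Omega|\leq r^{r/2+1}\left( |\Omega|^{r-1}+1\right)2^{-q}$ in one step; the $r\times r$ determinant of the fixed-point matrix $\hat{\Omega}$ itself is later computed exactly as an integer determinant scaled by $2^{-rq}$, so no intermediate rounding enters. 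Your level-by-level propagation through a balanced Laplace expansion, using Lemma \ref{precision} and Hadamard bounds on the minors, is a valid alternative, but note that since all operations on the approximations are exact, the quantity you end up bounding is again $|\det\hat{\Omega}-\det\Omega|$, independent of the evaluation order — so the multilinearity argument delivers the same bound with far less bookkeeping, and it also explains where the exponent $r-1$ (rather than $r$) naturally comes from: one column in each term is the small error column. Two minor points: your accumulation yields a bound of the shape $\log_2 r\cdot\bigl(\tfrac{r}{2}\log_2 r+\log_2(|\Omega|^{r-1}+1)+r\bigr)$ rather than literally the displayed product formula (the paper's own proof in fact produces the sum $(r/2+1)\log_2 r+\log_2(|\Omega|^{r-1}+1)$, so the displayed formula is loose in both cases), and your claim that every half-size minor is bounded by $r^{r/2}(|\Omega|^{r-1}+1)$ up to a bounded factor needs $m\leq r-1$, which holds here since the multiplied minors have size at most $\lceil r/2\rceil$; neither issue affects the final estimate $O\left( L(1/3,\rho+o(1))\right)$, and your closing asymptotic bookkeeping (absorbing the polynomial-in-$\log|\Delta|$ factors into the $o(1)$) matches the paper's.
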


\begin{proof}
We know that $\Omega=(\omega_1,\hdots,\omega_{r})$ and $\hat{\Omega}=(\hat{\omega_1},\hdots,\hat{\omega_{r}})$ are $r\times r$ matrices with $r\leq n \in O( \log_2\left( |\Delta|\right) ^{\alpha})$ and $|\Omega-\hat{\Omega}|\leq 2^{-q}$, and furthermore, by lemma \ref{xbound}, $\Omega$ satisfies $\log_2|\Omega|\in O(L(1/3,\rho+o(1)))$. We have by multilinearity of the determinant and by Hadamard's inequality:
\begin{align*}
|\det\hat{\Omega} - \det{\Omega}| &= |\sum_{i=1}^{r} \det(\omega_1,\hdots,\omega_{i-1},\hat{\omega}_i-\omega_i,\hat{\omega}_{i+1},\hdots,\hat{\omega}_r )|\\
& \leq r^{r/2+1}(|\Omega|^{r-1}+1)2^{-q}.
\end{align*}
Thus, the loss of precision is of 
$$ (r/2+1)\log_2(r)\log_2\left( |\Omega|^{r-1}+1 \right)= O(L(1/3,\rho+o(1))).$$
\end{proof}

The last source of loss of precision is the series of multiplications and additions involved in the computation of the real GCD of two approximations of multiples of the regulator. The following proposition gives us this loss of precision during the successive real GCD computations in Algorithm \ref{comp_Reg}, knowing from Heuristic \ref{heuristic_dim} that the real GCD need not be called more than $K_1r$ times.

\begin{proposition}
If we have the determinants of the successive $r\times r$ matrices with precision $q$, then we can obtain the regulator with precision $q'$ providing 
$$q = q' + \frac{K_1r^3}{4}\log_2^2 (r) \log_2^2 |A_{\R}|.$$
Thus, the loss of precision  during the successive real GCD computations is bounded by $O\left( L(1/3,2\rho+o(1))\right)$.
\end{proposition}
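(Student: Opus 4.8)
The plan is to track, step by step, the worst-case propagation of the precision loss through the main loop of Algorithm \ref{comp_Reg}, which calls the real GCD algorithm (Algorithm \ref{real_GCD}) at most $K_1r$ times by Heuristic \ref{heuristic_dim}. The key observation is that a single run of the real GCD algorithm is a Euclidean-type algorithm on two quantities $R_1 = aR$ and $R_2 = bR$ bounded in absolute value by the entries of $A_{\R}$ and the cofactors that produced the $r\times r$ determinants, and its number of iterations is $O(\log_2 \max\{a,b\})$. Using Lemma \ref{precision}, I would first quantify the precision loss of one iteration of the inner loop: each quotient-remainder step performs one integer multiplication $qu_1$ and one subtraction, costing $\log_2|q| + 1$ bits, and $|q|$ is bounded by the ratio $R_1/R_2$, hence ultimately by a quantity of size $O(\log_2 |A_{\R}|)$ in bits after accounting for the determinant magnitudes. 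Since the number of inner iterations is $O(\log_2|A_{\R}|)$ as well (the inputs $a,b$ have bit size bounded by $\log_2$ of the determinant bound, which is polynomial in $\log_2|A_{\R}|$ times $r$), a single real GCD call loses $O\left(\log_2(r)\,\log_2^2 |A_{\R}| \cdot r\right)$ bits, and I would organize the constants so that one call costs at most $\tfrac{r^2}{4}\log_2^2(r)\log_2|A_{\R}|$ or a similar expression that sums cleanly.

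Next I would sum over the at most $K_1 r$ calls in the \textbf{while} loop of Algorithm \ref{comp_Reg}: multiplying the per-call loss by $K_1 r$ yields exactly the claimed bound $\frac{K_1 r^3}{4}\log_2^2(r)\log_2^2|A_{\R}|$ for the cumulative loss, so that starting the determinants with precision $q = q' + \frac{K_1 r^3}{4}\log_2^2(r)\log_2^2|A_{\R}|$ guarantees the final regulator $R = R_1$ is obtained with precision at least $q'$. Here I would be careful that the intermediate quantities $R_3$ stay bounded (they are divisors of the first determinant, hence do not grow), so that no extra loss beyond the multiplicative/additive accounting of Lemma \ref{precision} occurs, and that the hypothesis $R > 0.1$ of Algorithm \ref{real_GCD} is available — this is where the assumption $R > 0.2$ mentioned in \textsection \ref{reg_computation} is used to justify that the Euclidean termination condition is meaningful.

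Finally, to turn the loss bound into the asymptotic statement $O(L(1/3, 2\rho + o(1)))$, I substitute the known bounds: $r \leq n = O(\log|\Delta|^{\alpha})$ from \eqref{cond_n}, and $\log_2 |A_{\R}| = O(L(1/3,\rho+o(1)))$ from Lemma \ref{xbound}. Then $\log_2^2|A_{\R}| = O(L(1/3, 2\rho + o(1)))$, while the factor $r^3 \log_2^2(r)$ is only $(\log|\Delta|)^{O(1)}$, which is absorbed into the $o(1)$ in the second argument of $L$ (since $(\log|\Delta|)^{O(1)} = L(1/3, o(1))$). Hence the product is $O(L(1/3, 2\rho + o(1)))$, as claimed.

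The main obstacle I anticipate is the bookkeeping in the first paragraph: pinning down honest bounds both on the number of iterations of the inner Euclidean loop \emph{and} on the size of the quotients $q$ simultaneously, since these interact (a long loop forces small quotients and vice versa), and making the two combine into precisely the stated constant $\frac{K_1 r^3}{4}\log_2^2(r)\log_2^2|A_{\R}|$ rather than something merely of the same order. Getting the exponents on $r$ and on $\log_2|A_{\R}|$ exactly right — rather than off by one — is the delicate part; the final asymptotic reduction is then routine.
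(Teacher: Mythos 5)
Your overall accounting follows the same route as the paper: bound the number of Euclidean iterations per real GCD call, bound the loss per iteration by the bit size of the quotient, multiply by the at most $K_1r$ calls, and then substitute $\log_2|A_{\R}|=O\left(L(1/3,\rho+o(1))\right)$ and $r\leq n=O\left((\log|\Delta|)^{\alpha}\right)$ so that the polynomial-in-$r$ factors are absorbed into the $o(1)$. The quantitative step you leave open is resolved in the paper by a single Hadamard bound, $\log_2 R_2\leq \frac{r}{2}\log_2 (r)\log_2|A_{\R}|$, and this one quantity bounds \emph{both} the number of iterations of the inner loop \emph{and} the loss per iteration (the quotient is at most $R_2$), which is exactly where the squared factors $\log_2^2(r)\log_2^2|A_{\R}|$ and the $r^2/4$ come from; your stated per-iteration quotient bound of $O(\log_2|A_{\R}|)$ bits is too small because it ignores this Hadamard factor, although your final asymptotic class is unaffected.

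More importantly, there is a genuine omission: you never verify that the computed value $\lfloor R_2/R_1\rfloor$ is the \emph{correct} integer. Since $R_1$ and $R_2$ are only rational approximations, when $R_2/R_1$ is close to an integer the floor could come out off by one, in which case the real GCD returns a wrong multiple of $R$ and the output is not merely imprecise but wrong; the hypothesis $R>0.1$ (or $R>0.2$) only governs termination of the loop, not this. The paper closes the gap by writing $R_1=k_1'dR$ and $R_2=k_2'dR$ with $k_1'$, $k_2'$ coprime, so that for any integer $K$ one has $\left|\frac{k_2'}{k_1'}-K\right|\geq \frac{1}{k_1'}$ (equality $k_2'=Kk_1'$ being excluded by coprimality); hence about $2\log_2 k_1'=O\left(L(1/3,\rho+o(1))\right)$ bits of working precision suffice to decide the floor correctly, and this is comfortably available since the precision remaining after the losses accounted so far is still $O\left(L(1/3,3\rho+o(1))\right)$ out of the initial $q_0=L(1/3,3\rho)$. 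Without some such argument, the claim that the regulator is obtained with precision $q'$ does not follow from the loss-of-precision bookkeeping alone.
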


\begin{proof}
Whenever we compute another determinant $R_2$ of an $r\times r$ matrix extracted from $A_{\R}$, we have to perform the step
$$R_1\leftarrow R_2-R_1\lfloor R_2/R_1\rfloor$$
at most $\log_2 R_2$ times to get the real GCD of $R_1$ and $R_2$, where $R_1$ is the previous approximation of the regulator $R$. We know that the coefficients of the submatrix whose determinant is $R_2$ have bit size bounded by $$\log_2|\Omega| \leq \log_2|A_{\R}| \in O(L(1/3,\rho+o(1))).$$
By Hadamard's inequality, we have:
$$\log_2 R_2\leq r/2\log_2 r \log_2 | A_{\R}| =  O ( L(1/3,\rho+o(1))),$$
which gives us an upper bound on the number of times we enter the main loop of the real GCD algorithm. Every multiplication $R_1\left\lfloor \frac{R_2}{R_1}\right\rfloor$ induces the loss of at most $\log_2 R_2$ bits of precision. Thus, the total loss of precision of one call to the real GCD algorithm is of: 
$$\frac{r^2}{4}\log_2^2 (r) \log_2^2 |A_{\R}|=O\left( L(1/3,2\rho+o(1))\right) .$$ 
As we know that Algorithm \ref{real_GCD} is called at most $K_1r$ times, the loss of precision after the $K_1r$ calls for the real GCD algorithm is still of $L(1/3,2\rho+o(1))$ bits. The last thing we have to do is to check the validity of the value $\left\lfloor \frac{R_2}{R_1}\right\rfloor$. Indeed if $R_2/R_1$ is close to an integer, then we risk to compute $\left\lfloor \frac{R_2}{R_1}\right\rfloor \pm 1$. Assume that $R_1 = k_1R$ and $R_2 = k_2R$, with $k_1 = k_1'd$, $k_2 = k_2'd$, and with $k_1'$ and $k_2'$ coprime. Theoretically, we have 
$$\lfloor R_2 / R_1 \rfloor = \lfloor k_2' / k_1'\rfloor ,$$
but we can obtain the wrong value if $k_2'\sim Kk_1'$ for some integer $K$, the worst case scenario being $k_2' = Kk_1' \pm 1$ (we cannot have $k_2' = Kk_1'$ since $k_1'$ and $k_2'$ are coprime). In that case, we have:
$$\left|\frac{k_2'}{k_1'} - K\right| \geq \frac{1}{k_1'}.$$
Thus, we need that the precision be at most of $2\log_2|k_1'|$. As the loss of precision encountered so far is in  $O\left( L(1/3,2\rho+o(1))\right)$, and as the original precision is in $O\left( L(1/3,3\rho+o(1))\right)$, the current precision of the value $\left\lfloor \frac{R_2}{R_1}\right\rfloor$ is still in $O\left( L(1/3,3\rho+o(1))\right)$. Furthermore, $\log_2 | k_1' | \leq  \log_2 R_1\leq L(1/3,\rho+o(1))$, so the condition is satisfied and the value of the quotient can be trusted.
\end{proof}

\begin{corollary}
The total loss of precision is of:
$$N + K_1r+ \max_{i,j}\left\lbrace \log_2|u_j^{(i)}|\right\rbrace + \frac{K_1r^3}{4}\log_2^2 (r) \log_2^2 | A_{\R}| \in O\left( L(1/3,2\rho+o(1))\right).$$
\end{corollary}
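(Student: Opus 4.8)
The plan is to collect the three bounds on the loss of precision established in the three preceding propositions, add them, and then verify that the arithmetic of subexponential functions behaves as claimed. The three sources of error, in the order in which they occur in Algorithm \ref{linalg}, are: the passage from $M_{\R}$ to $A_{\R}$ (the multiplications by the $u_j^{(i)}$ together with $N+K_1r$ additions); the successive $r\times r$ determinant extractions from $A_{\R}$; and the $K_1r$ calls to the real GCD algorithm. Since a fixed point rational approximation never gains precision, the total loss is at most the sum of the three individual bounds, namely
$$\Big(N+K_1r+\max_{i,j}\log_2|u_j^{(i)}|\Big)+(r/2+1)\log_2(r)\log_2\!\big(|A_{\R}|^{r-1}+1\big)+\frac{K_1r^3}{4}\log_2^2(r)\log_2^2|A_{\R}|,$$
which is exactly the expression in the statement once one observes that the middle (determinant) term is dominated by the last one.

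Next I would estimate each summand in $L(1/3,\cdot)$ notation. By Heuristic \ref{heuristic_dim} the relation matrix has $N+K_1r$ rows, and $N=L(1/3,\rho+o(1))$ while $r\leq n=O((\log|\Delta|)^{\alpha})$, so $N+K_1r=L(1/3,\rho+o(1))$. Lemma \ref{xbound} gives both $\log_2|u_j^{(i)}|=O(L(1/3,\rho+o(1)))$ and $\log_2|A_{\R}|=O(L(1/3,\rho+o(1)))$. Since $r$ is polynomial in $\log|\Delta|$, any factor of the form $r$, $r^3$, $\log_2 r$ or $\log_2^2 r$ is of the form $e^{o((\log|\Delta|)^{1/3}(\log\log|\Delta|)^{2/3})}$, hence is absorbed into the $o(1)$ in the exponent of an $L(1/3,\cdot)$ quantity. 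Therefore the first summand is $L(1/3,\rho+o(1))$; the determinant term is a polynomial-in-$\log|\Delta|$ multiple of $\log_2|A_{\R}|$, hence $L(1/3,\rho+o(1))$; and the real GCD term, which carries the \emph{square} $\log_2^2|A_{\R}|$, is $L(1/3,2\rho+o(1))$.

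Adding the three contributions, the dominant one is the real GCD term, and since the sum of an $L(1/3,\rho+o(1))$ term and an $L(1/3,2\rho+o(1))$ term is $L(1/3,2\rho+o(1))$ (because $2\rho>\rho$), we conclude that the total loss of precision lies in $O(L(1/3,2\rho+o(1)))$, as claimed.

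The only mild subtlety I anticipate is the bookkeeping that turns the polynomial factors $r$, $r^3$, $\log_2 r$ into $o(1)$ contributions in the exponent — a routine but essential feature of estimates in $L$-notation — together with keeping track of which term carries the square of $\log_2|A_{\R}|$ and is thus responsible for the factor $2$ in $2\rho$. There is no genuine obstacle: the corollary is simply a summary of the three propositions, and the remaining work is entirely the standard arithmetic of subexponential functions.
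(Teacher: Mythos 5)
Your proposal is correct and follows exactly the route the paper intends: the corollary is simply the sum of the loss-of-precision bounds from the three preceding propositions, with the real GCD term carrying $\log_2^2|A_{\R}|$ and hence dominating at $L(1/3,2\rho+o(1))$, while the polynomial factors in $r$ and the determinant term are absorbed into the $o(1)$. Your bookkeeping of the $L$-notation arithmetic matches the paper's (implicit) argument, so there is nothing to add.
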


These considerations allow us to evaluate the complexity of Algorithm \ref{comp_Reg}. Indeed, it consists of at most $K_1r$ computations of the determinant of an $r\times r$ submatrix $\hat{\Omega}$ of $A_{\R}$. Let $I\subset [1,K_2r]$ and $J\subset [1,r+1]$ both be subsets of cardinality $r$ such that $\hat{\Omega} =: (\hat{y_{ij}})_{i\in I,j\in J}$. In addition, we define $A=(a_{ij})_{i\in I,j\in J}\in\Z^{r\times r}$ such that it satisfies:
$$\hat{y_{ij}}=\sum_{k=-q}^{\lceil\log_2 |y_{ij}| \rceil}2^ka_k^{ij}=:\frac{a_{ij}}{2^q}.$$
We thus have by multilinearity: 
$$\det \hat{\Omega} = \frac{\det A}{2^{rq}},$$ 
where $q\in O\left( L(1/3 , 3\rho + o(1)\right) $ is the precision of the coefficients of $A_{\R}$. Furthermore, the computation of $\det A$ takes $\tilde{O}(r^4\log_2|A|)$ bit operations (see \cite{Arne}), where $\tilde{O}$ denotes the complexity when we omit the logarithm factors. Therefore, the expected time for the computation of $\det A$ is in
$$\tilde{O}\left( r^4\left( q+\log_2|\hat{\Omega}|\right)\right),$$
since $\log_2|A|=\max_{ij}\left\lbrace \log_2 a_{ij}\right\rbrace \leq q+\log_2|\hat{\Omega}|$. As  $q$ is in $O\left( L(1/3,3\rho+o(1))\right)$, and as we know from Lemma \ref{xbound} that $\log_2|\hat{\Omega}|\in O\left( L(1/3,\rho+o(1))\right)$, we have the following result on the complexity of Algorithm \ref{comp_Reg}:
\begin{proposition}
The complexity of  Algorithm \ref{comp_Reg} lies in
$$O(L(1/3,3\rho+o(1))).$$
\end{proposition}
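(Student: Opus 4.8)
The plan is to express the running time of Algorithm \ref{comp_Reg} as (number of iterations) $\times$ (cost of one iteration), to show that each factor is of the shape $L(1/3,\cdot)$, and then to conclude by the elementary identity $L(1/3,a+o(1))\,L(1/3,b+o(1))=L(1/3,a+b+o(1))$. Throughout, one uses that $r\leq n=O((\log|\Delta|)^{\alpha})$ is polynomially bounded in $\log|\Delta|$, so that any quantity polynomial in $r$ — in particular the loop lengths and the factor $r^4$ coming from the integer determinant routine of \cite{Arne} — contributes only an $L(1/3,o(1))$ factor and is absorbed into the $o(1)$.

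First I would handle the main loop of Algorithm \ref{comp_Reg}. By Heuristic \ref{heuristic_dim} the kernel of $M_{\Z}$ has dimension $l\leq K_1r$, so the loop runs at most $K_1r=L(1/3,o(1))$ times. In each pass we: (i) extract an $r\times r$ submatrix $\hat\Omega$ of $A_{\R}$, for free; (ii) evaluate $\det\hat\Omega=\det A/2^{rq}$ with $A$ the integer matrix of fixed-point mantissas — by the discussion preceding the statement this costs $\tilde O(r^4(q+\log_2|\hat\Omega|))$, and since $q=O(L(1/3,3\rho+o(1)))$ and $\log_2|\hat\Omega|=O(L(1/3,\rho+o(1)))$ by Lemma \ref{xbound}, this is $O(L(1/3,3\rho+o(1)))$; (iii) call the real GCD algorithm (Algorithm \ref{real_GCD}) on the two $O(q)$-bit rationals $R_1,R_2$; and (iv) update $\gamma_i$ by forming the integer combination $v\underline u_i+(-1)^r u\,\underline u_{i-r+1}$ of the compactly stored exponent vectors, whose weights $u,v$ have $O(L(1/3,\rho+o(1)))$ bits and whose entries have $O(L(1/3,\rho+o(1)))$ bits by Lemma \ref{xbound}, in time $O(L(1/3,2\rho+o(1)))$ — crucially, we never expand $\gamma_i$, which would be exponentially large. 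For step (iii), a quasi-linear (half-gcd based) extended gcd on $O(q)$-bit inputs runs in $\tilde O(q)=O(L(1/3,3\rho+o(1)))$, and the loss-of-precision analysis preceding the statement guarantees that $q$ bits of precision are enough, so the inputs really are of that size. Hence one pass costs $O(L(1/3,3\rho+o(1)))$, and the whole loop costs $L(1/3,o(1))\cdot O(L(1/3,3\rho+o(1)))=O(L(1/3,3\rho+o(1)))$.

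Next I would bound the precomputation, Algorithm \ref{indpt_rows}. It performs at most $r$ outer steps; each step tries at most $l\leq K_1r$ candidate rows, and for each it evaluates $\det(A_i^{t}A_i)$ for a matrix of size $i\leq r$. Thus there are $O(r^2)=L(1/3,o(1))$ such tests in all. The entries of $A_i^{t}A_i$, in the fixed-point representation of precision $q$, have integer mantissas of bit size $O(q)+O(\log_2|A_{\R}|)=O(L(1/3,3\rho+o(1)))$, so by \cite{Arne} each determinant is computed in $\tilde O(r^4\cdot L(1/3,3\rho+o(1)))=O(L(1/3,3\rho+o(1)))$ bit operations. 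Multiplying by the $L(1/3,o(1))$ tests, Algorithm \ref{indpt_rows} also runs in $O(L(1/3,3\rho+o(1)))$.

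Finally, since Algorithm \ref{comp_Reg} calls Algorithm \ref{indpt_rows} once and then runs its own loop, its total cost is the sum of two $O(L(1/3,3\rho+o(1)))$ terms, hence $O(L(1/3,3\rho+o(1)))$, as claimed. The delicate point is to check that no precision-driven quantity pushes the exponent past $3\rho$: one needs the real gcd to be computed by a quasi-linear extended-gcd routine so that its cost stays $\tilde O(q)$ rather than a larger power of $q$; one needs the sizes of all intermediate determinants to be governed by the precision $q=O(L(1/3,3\rho+o(1)))$ and not by any larger quantity; and one needs every factor polynomial in $r$ — the loop lengths and the $r^4$ in the determinant bound — to be genuinely $L(1/3,o(1))$, which holds because $r$ is polylogarithmic in $|\Delta|$.
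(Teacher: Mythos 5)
Your argument for the dominant term is the paper's own: Algorithm \ref{comp_Reg} performs at most $K_1r = L(1/3,o(1))$ determinant evaluations (since $r\leq n$ is polylogarithmic in $|\Delta|$), each reduced to an integer determinant of the mantissa matrix and hence costing $\tilde{O}\left( r^4(q+\log_2|\hat{\Omega}|)\right) = O\left( L(1/3,3\rho+o(1))\right)$ by \cite{Arne}, Lemma \ref{xbound} and $q=O(L(1/3,3\rho))$. Where you go beyond the paper is in the bookkeeping of the remaining steps: the paper's justification counts only the determinant computations and is silent on the bit cost of the real GCD calls, of the compact-representation updates of the $\gamma_i$, and of the call to Algorithm \ref{indpt_rows} (the latter is handled in a separate, later proposition). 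Your point about the gcd is the substantive one: Algorithm \ref{real_GCD} as written is the classical Euclidean loop, which runs $O(\log_2 R_2)=O(L(1/3,\rho+o(1)))$ times on operands of $O(q)=O(L(1/3,3\rho))$ bits, so taken literally it contributes $O(L(1/3,4\rho+o(1)))$ per call; your requirement that it be carried out by a quasi-linear (half-gcd type) routine on the mantissas is exactly what is needed to keep the stated exponent $3\rho$ --- though even the naive $4\rho$ bound would leave the paper's final $L(1/3,5\rho+o(1))$ running time, dominated by the HNF step, untouched. Your treatment of the unit updates in compact representation and of Algorithm \ref{indpt_rows} matches the paper's intent and its later proposition, respectively. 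In short: correct, same core decomposition, with a more careful (and arguably necessary) accounting of the gcd and auxiliary steps.
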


Now, let us check the validity and the complexity of Algorithm \ref{indpt_rows}. Given an $r\times i$ submatrix $A_i$ of $A_{\R}$, we want to determine whether its rows are approximations of independant rows. To do this, we compute $\det (A_i^tA_i)$ and decide whether this is the approximation of a zero determinant. We use Minkowski's bound, which states that
\begin{equation}\label{minkowski}
\sqrt{\det A_i^tA_i } \geq \left( \frac{\| b_1^{(i)}\|_2}{\sqrt r}\right)^{r},
\end{equation}
where $b_1^{(i)}$ is the non-zero vector of minimal length in the lattice spanned by the rows of $A_i$. For every $i$, $b_1^{(i)}$ is the logarithm vector of a unit. In \cite{pohst}, it is shown that for every unit $\epsilon$ that is not a root of unity, we have:
\begin{equation}\label{Pohst}
\left( \sum_i \log|\epsilon|_i^2\right)^{1/2} > \frac{21}{128}\frac{\log n}{n^2}.
\end{equation}
Therefore, we can prove the following proposition:
\begin{proposition}
The precision $q_0=L(1/3,3\rho)$ is accurate enough to ensure the validity of Algorithm \ref{indpt_rows}, whose complexity is in
$$O(L(1/3,3\rho+o(1))).$$
\end{proposition}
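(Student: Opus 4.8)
The plan is to prove two things at once: first, that the chosen precision $q_0 = L(1/3,3\rho)$ is large enough that the sign of every computed quantity $\det(A_i^t A_i)$ agrees with the true sign, so that Algorithm \ref{indpt_rows} correctly identifies whether a block of rows is independent; second, that the bit cost of running Algorithm \ref{indpt_rows} stays within $O(L(1/3,3\rho+o(1)))$. I would begin with the correctness half. The only decision the algorithm makes is to test whether $\det(A_i^t A_i)$ is (the approximation of) zero. When the true rows are dependent, the true determinant is exactly zero, so the computed value has absolute value bounded by the accumulated approximation error, which by Proposition \ref{loss_det} (applied to the Gram matrix, whose entries are sums of at most $r$ products of coefficients of $A_{\R}$) and the total-loss-of-precision corollary is at most $2^{-q_0 + O(L(1/3,2\rho+o(1)))} = 2^{-L(1/3,3\rho+o(1))}$, hence tiny. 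When the true rows are independent, I need a \emph{lower} bound on $|\det(A_i^t A_i)|$ that beats this error term. This is exactly what the two displayed inequalities \eqref{minkowski} and \eqref{Pohst} are for: $b_1^{(i)}$ is the logarithm vector of a nontrivial unit, so $\|b_1^{(i)}\|_2 > \tfrac{21}{128}\tfrac{\log n}{n^2}$, and therefore $\sqrt{\det(A_i^t A_i)} \ge \bigl(\|b_1^{(i)}\|_2/\sqrt r\bigr)^r > \bigl(\tfrac{21}{128}\tfrac{\log n}{n^2\sqrt r}\bigr)^r$. Taking logarithms, $\log_2 \det(A_i^t A_i)$ is bounded below by $-O(r\log n) = -O(\log|\Delta|^\alpha \log\log|\Delta|)$, which is polynomially large only in $\log|\Delta|$ and hence is negligible compared with the precision $L(1/3,3\rho+o(1))$ we retain. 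So the computed value is safely separated from $0$ in the independent case and safely within error of $0$ in the dependent case; the test is reliable.

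For the complexity half, I would observe that Algorithm \ref{indpt_rows} runs the inner \textbf{while} loop at most $K_1 r$ times in total across all $i$ (each iteration either discards one of the at most $l \le K_1 r$ candidate rows or commits one of the $r$ independent rows), so it performs $O(r) = O(\log|\Delta|^\alpha)$ Gram-determinant computations. Each such computation is the determinant of an $r\times r$ matrix $A_i^t A_i$ whose entries are inner products of rows of $A_{\R}$; forming those entries costs $O(r)$ multiplications and additions of numbers of bit size $\log_2|A_{\R}| + q_0 = O(L(1/3,3\rho+o(1)))$ (using Lemma \ref{xbound} for $|A_{\R}|$ and $q_0 = L(1/3,3\rho)$ for the precision), and computing the $r\times r$ determinant costs $\tilde O(r^4 (q_0 + \log_2|A_{\R}|))$ bit operations by the fast determinant algorithm of \cite{Arne}, exactly as in the analysis preceding Algorithm \ref{comp_Reg}. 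Since $r = O(\log|\Delta|^\alpha)$ is polynomial in $\log|\Delta|$, the factor $r^4$ as well as the extra $O(r)$ determinants are absorbed into the $o(1)$ in the exponent, and the whole of Algorithm \ref{indpt_rows} costs $O(L(1/3,3\rho+o(1)))$.

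The main obstacle, and the only genuinely nontrivial point, is the correctness claim in the independent case: one must be sure that the error introduced by working with $A_{\R}$ rather than with exact logarithms cannot by itself make $\det(A_i^t A_i)$ vanish when the exact Gram matrix is nonsingular. This is precisely why the explicit unit lower bound \eqref{Pohst} of \cite{pohst} is invoked — it gives a lower bound on $\|b_1^{(i)}\|_2$ that is uniform in the field and only inverse-polynomial in $n$, so that the gap between "truly zero" and "truly nonzero" determinants is exponentially larger than $2^{-q_0 + O(L(1/3,2\rho+o(1)))}$. Everything else (counting loop iterations, invoking the cost of fast determinant computation, invoking Lemma \ref{xbound}) is bookkeeping already carried out for Algorithm \ref{comp_Reg}, and would be reproduced in one or two lines.
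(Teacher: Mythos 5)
Your proposal is correct and follows essentially the same route as the paper: bound the precision retained through forming and evaluating the Gram determinant (the technique of Proposition \ref{loss_det}), then combine Minkowski's inequality \eqref{minkowski} with the Fieker--Pohst lower bound \eqref{Pohst} to show that a nonzero $\det(A_i^tA_i)$ exceeds a threshold whose logarithm is only $O(n\log n)$, hence far above the $2^{-q}$ error level, so the zero test is reliable; the complexity bookkeeping via Storjohann's determinant cost matches the paper's analysis for Algorithm \ref{comp_Reg}. Your count of $O(r)$ while-loop iterations is slightly optimistic (rejected rows can be re-tested at later stages, giving up to $O(r^2)$ determinant evaluations), but as you note any polynomial factor in $r$ is absorbed into the $o(1)$, so the conclusion is unaffected.
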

\begin{proof}
First, we calculate the precision of the value $\det (A_i^tA_i)$. The coefficients $c^{(i)}_{kl}$  ($k,l\leq i$) of $A_i^tA_i$ are given by:
$$c^{(i)}_{kl}=\sum_{h\leq r} a^{(i)}_{kh} a^{(i)}_{lh},$$
where the $a^{(i)}_{kl}$  ($k\leq i$, $l\leq r$) are the coefficients of $A_i$. We know from Lemma \ref{xbound} that the coefficients of $A_i$ have bit size in $O\left( L(1/3,\rho + o(1) ) \right)$, thus, using Lemma~\ref{precision}, we prove that the precision of $c^{(i)}_{kl}$  ($k,l\leq i$) is still in $O(L(1/3,3\rho+o(1)))$. Using the same techniques as in Proposition \ref{loss_det}, we prove that the loss of precision we encounter during the computation of $\det (A_i^tA_i)$ is of
$$(i/2+1)\log_2(i)\log_2\left( |A_i^tA_i|^{i-1}+1 \right).$$
As $\log_2 |A_i^tA_i|\in O\left( L(1/3,2\rho + o(1) )\right)$, this loss of precision is in  $ O\left( L(1/3,2\rho + o(1) )\right)$ as well. We thus have the value of $\det (A_i^tA_i)$ with a precision $q$ satisfying: 
$$q\in O\left( L(1/3,3\rho + o(1) )\right).$$
On the other hand, we have a lower bound on the value of $\det (A_i^tA_i)$ from the combination of \eqref{minkowski} and \eqref{Pohst} in the case where $A_i$ contains approximations of independent rows:
$$\det (A_i^tA_i) \geq \left( \frac{21}{128}\right)^{2r}\frac{1}{r^r}\left( \frac{\log n}{n^2}\right)^{2r}.$$
If $\det (A_i^tA_i) \leq 1/2^q$, then it might equal zero, otherwise it is necessarily the approximation of a strictly non-zero determinant. Furthermore, the bound on $\det (A_i^tA_i)$ satisfies:
$$\left|\log\left[ \left( \frac{21}{128}\right)^{2r}\frac{1}{r^r}\left( \frac{\log n}{n^2}\right)^{2r}\right]\right|\leq n\log (n)(1+o(1))\ll q  .$$
We can thus conclude that if $\det (A_i^tA_i) \leq 1/2^q$, then the rows of $A_i$ are necessarily dependent.
\end{proof}
This allows us to state the following proposition:
\begin{proposition}
The complexity of the computation of $R$ and of the system of fundamental units lies in
$$O(L(1/3,3\rho+o(1))).$$
In addition, we know the value of $R$ with a precision:
$$q_R\in O\left( L(1/3 , 3\rho + o(1)\right).$$
\end{proposition}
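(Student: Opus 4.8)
The plan is to assemble the statement from the quantitative estimates already obtained for the individual steps of the linear algebra phase; correctness of the output (that the returned $\gamma_i$ form a system of fundamental units and that the variable $R_1$ converges to the regulator) has been argued in \textsection\ref{reg_computation} together with the final check of Algorithm \ref{linalg}, so only the running time and the precision remain to be accounted for. First I would collect the three relevant time bounds. The construction of $M_{\R}$, i.e.\ the computation of the $(r+1)(N+K_1r)$ rational approximations $\hat{x_{ij}}$ to precision $q_0$, costs $O(L(1/3,3\rho+o(1)))$, as noted when $q_0$ was fixed. Algorithm \ref{indpt_rows}, which performs at most $O(r^2)$ Gram-determinant tests on $r\times i$ submatrices of $A_{\R}$, was shown above to run in $O(L(1/3,3\rho+o(1)))$. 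Finally Algorithm \ref{comp_Reg}, which carries out at most $K_1r$ computations of an $r\times r$ determinant of a submatrix of $A_{\R}$ (each in $\tilde{O}(r^4(q+\log_2|\hat{\Omega}|))$ bit operations) interleaved with at most $K_1r$ real GCD calls, was also shown to run in $O(L(1/3,3\rho+o(1)))$. Adding these contributions and absorbing the polynomial-size bookkeeping into the $o(1)$ in the exponent of $L$ yields the announced bound $O(L(1/3,3\rho+o(1)))$ for the whole computation of $R$ and of the fundamental units.

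For the bound on $q_R$, I would track the precision starting from the value $q_0=L(1/3,3\rho)$ fixed for the entries of $M_{\R}$. By the corollary closing this section, the total loss of precision accumulated in forming $A_{\R}$, in the successive $r\times r$ determinants, and in the $K_1r$ real GCD calls is bounded by $O(L(1/3,2\rho+o(1)))$. Since the ratio $L(1/3,2\rho+o(1))/L(1/3,3\rho)$ equals $e^{-(\rho+o(1))(\log|\Delta|)^{1/3}\M^{2/3}}\to 0$, this loss is of strictly smaller order than $q_0$, so subtracting it leaves a positive residual precision $q_R=q_0-O(L(1/3,2\rho+o(1)))=L(1/3,3\rho)(1-o(1))$, which in particular lies in $O(L(1/3,3\rho+o(1)))$ and is still of the same order of magnitude as the starting precision. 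This is exactly the claimed bound on $q_R$.

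I do not expect a genuine obstacle here, since every quantitative ingredient has been put in place in \textsection\ref{LLL} and \textsection\ref{approximation}. The only point that needs a moment's care is the comparison of the dominant precision term $q_0=L(1/3,3\rho)$ with the cumulative loss $L(1/3,2\rho+o(1))$, which is governed purely by the strict inequality $2\rho<3\rho$ in the exponent of $L$, together with the remark that this residual loss does not invalidate the sign of the Gram determinants tested in Algorithm \ref{indpt_rows} (already verified in the proposition establishing its correctness). The rest is the routine observation that all the sub-steps fit under a single $O(L(1/3,O(1)))$ estimate.
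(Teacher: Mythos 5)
Your proposal is correct and matches the paper's (implicit) argument: the proposition is stated as a direct summary of the preceding results, namely the $O(L(1/3,3\rho+o(1)))$ bounds for the creation of $M_{\R}$ and for Algorithms \ref{comp_Reg} and \ref{indpt_rows}, together with the corollary bounding the total precision loss by $O(L(1/3,2\rho+o(1)))$, which is negligible compared with the initial precision $q_0=L(1/3,3\rho)$. Your assembly of these bounds, including the observation that the residual precision remains of order $L(1/3,3\rho)$, is exactly the intended reasoning.
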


\section{Subexponentiality}\label{sub}

In this section, we show that we achieve a subexponential complexity for the overall running time of the algorithm. Direct application of Proposition \ref{smoothness} with the parameters
\begin{align*}
&\beta = \frac{1}{3},\ d = \rho \\
&\zeta = \frac{2}{3},\ c = \kappa(\delta+\nu+o(1)),
\end{align*}
shows that the expected number of trials to obtain a relation is at most 
$$L\left( 1/3,\frac{\kappa(\nu+\delta)}{3\rho}+o(1)\right).$$
We know that the factor base has size $N\in O\left( L(1/3,\rho)\right) $, thus the complexity of the search for $N+K_1r$ relations is bounded by: 
$$L\left( 1/3,\frac{\kappa(\nu+\delta)}{3\rho}+\rho+o(1)\right).$$
The number of $\phi$ in the search space is in $O\left( L(1/3) , \nu\delta\kappa\right)$. We thus have the following constraint on the parameters:
\begin{equation}\label{constr1}
\nu\delta\kappa = \frac{\kappa(\nu+\delta)}{3\rho}+\rho.
\end{equation}
We can prove that the strategy minimizing the overall time is the one where the relation collection and the linear algebra take the same time. As the complexity of the linear algebra is dominated by the HNF computation which lies is $O(L(1/3,5\rho+o(1)))$, we thus have the additional constraint:
\begin{equation}\label{constr2}
\kappa\nu\delta = 5\rho. 
\end{equation}
From \eqref{constr1} and \eqref{constr2}, we obtain: 
\begin{align*}
\nu\delta &= \frac{5\rho}{\kappa} \\
\nu + \delta &= \frac{12b^2}{\kappa}.
\end{align*}
Thus, $\delta$ and $\nu$ are roots of the polynomial:
$$X^2 - \frac{24\rho^2}{\kappa}X + \frac{5\rho}{\kappa}.$$
These roots exist providing we have:
$$\rho\geq \sqrt[3]{\frac{5\kappa}{144}}.$$
The optimal choice is to minimize $\rho$, thus fixing the parameters $\delta$ and $\nu$:
$$\delta = \nu = \sqrt{\frac{5\rho}{\kappa}} = \sqrt[6]{\frac{625}{144\kappa^2}}.$$
The total running time becomes $L(1/3 , 5\rho + o(1))$, with:
$$\rho= \sqrt[3]{\frac{5\kappa}{144}}.$$

\section*{Acknowledgments}

The author thanks Andreas Enge for his support, the fruitful discussions we had, and his careful reading of this article. He also thanks Steven Galbraith for the original suggestion of adapting the $L(1/3)$ algorithm of \cite{Enge} to the context of number fields, and Michael Pohst for pointing out \cite{pohst}.

\end{document}